\newcommand{\ra}[1]{\stackrel{#1}{\longrightarrow}}
\newcommand{\txt}[1]{\quad\textnormal{#1}\quad}
\newcommand{\sem}[2][M\!,g]{ [\![ #2 ]\!]^{}}
\newcommand{\lbi}[3]{\mathrm{let}\;#1\;\mathrm{be}\;#2\;\mathrm{in}\;#3}
\begin{document}

\mainmatter  

\title{A Categorical Semantics\\
for Linear Logical Frameworks}

\titlerunning{A Categorical Semantics for Linear Logical Frameworks}

%
%
\author{Matthijs V\'ak\'ar}
\authorrunning{Matthijs V\'ak\'ar}

\institute{Department of Computer Science,\\
University of Oxford, Oxford, United Kingdom\\
}

%
%

\toctitle{Syntax and Semantics of Linear Dependent Types}
\tocauthor{Matthijs V\'ak\'ar}
\maketitle

\vspace{-10pt}
\begin{abstract}A type theory is presented that combines (intuitionistic) linear types with type dependency, thus properly generalising both intuitionistic dependent type theory and full linear logic. A syntax and complete categorical semantics are developed, the latter in terms of (strict) indexed symmetric monoidal categories with comprehension. Various optional type formers are treated in a modular way. In particular, we see that the historically much-debated multiplicative quantifiers and identity types arise naturally from categorical considerations. These new multiplicative connectives are further characterised by several identities relating them to the usual connectives from dependent type theory and linear logic. Finally, one important class of models, given by families with values in some symmetric monoidal category, is investigated in detail.
\end{abstract}
\vspace{-20pt}
\section{Introduction}\vspace{-7pt}
Starting from Church's simply typed $\lambda$-calculus (or intuitionistic propositional type theory), two extensions branch off in perpendicular directions: \vspace{-5pt}
\begin{itemize}
\item[$\bullet$] following the Curry-Howard propositions-as-types interpretation \emph{dependent type theory} (DTT) \cite{martin1998intuitionistic} extends the simply typed $\lambda$-calculus from a proof-calculus of intuitionistic propositional logic to one for predicate logic;
\item[$\bullet$] \emph{linear logic} \cite{girard1987linear} gives a more detailed resource-sensitive analysis, exposing precisely how many times each assumption is used in proofs.
\end{itemize}
\vspace{-5pt}

A combined \emph{linear dependent type theory} is one of the interesting directions to explore to gain a more fine-grained understanding of \emph{homotopy type theory} \cite{hottbook} from a computer science point of view, explaining its flow of information. Indeed, many of the usual settings for computational semantics are naturally linear in character, either because they arise as $!$-co-Kleisli categories (coherence space and game semantics) or for more fundamental reasons (quantum computation).

Combining dependent types and linear types is a non-trivial task, however, and despite some work by various authors that we shall discuss, the precise relationship between the two systems remains poorly understood. The discrepancy between linear and dependent types is the following.\vspace{-5pt}
\begin{itemize}
\item[$\bullet$] The lack of structural rules in \emph{linear type theory} forces us to refer to each variable precisely once---for a sequent $x:A\vdash t:B$, $x$ occurs uniquely in $t$.
\item[$\bullet$] In \emph{dependent type theory}, types can have free variables---$x:A\vdash B\;\mathrm{type}$, where $x$ is free in $B$. Crucially, if $x:A\vdash t:B$, $x$ may also be free in $t$.
\end{itemize}
\vspace{-7pt}\nopagebreak
What does it mean for $x$ to occur uniquely in $t$ in a dependent setting? Do we count its occurrence in $B$? The usual way out, which we shall also follow, is to restrict type dependency to intuitionistic terms. Although this seems very limiting---for instance, we do not obtain an analogue of the Girard translation that embeds DTT in the resulting system---it is not clear that there is a reasonable alternative. Moreover, as even this limited scenario has not been studied extensively, we hope that a semantic analysis, which was so far missing entirely, may shed new light on the old mystery of linear type dependency.

Historically, Girard's early work in linear logic already takes steps toward extending a linear analysis to predicate logic. Although it talks about first-order quantifiers, the analysis appears to have stayed rather superficial, omitting the identity predicates, which, in a way, are what make first-order logic tick. Closely related is that an account of internal quantification, or a linear variant of Martin-L\"of's type theory, was missing, let alone a Curry-Howard correspondence.

Later, linear types and dependent types were first combined in a Linear Logical Framework \cite{cervesato1996linear}, where a syntax was presented that extends a Logical Framework with linear types (that depend on terms of intuitionistic types). This has given rise to a line of work in the computer science community \cite{dal2011linear,petit2012linear,gaboardi2013linear}. All the work seems to be syntactic in nature, however, and seems to be mostly restricted to the asynchronous fragment in which we only have $\multimap$-, $\Pi$-, $\top$-, and $\&$-types. An exception is the Concurrent Logical Framework \cite{watkins2003concurrent}, which treats synchronous connectives resembling our $I$-, $\otimes$-, \mbox{$\Sigma$-,} and $!$-types. An account of additive disjunctions and identity types is missing entirely.

On the other hand, similar ideas, this time at the level of categorical semantics and specific models (from homotopy theory, algebra, and physics), have emerged in the mathematical community  \cite{may2006parametrized,shulman2013enriched,ponto2012duality,schreiber2014quantization}. In these models, as with Girard, a notion of comprehension was missing and, with that, a notion of identity type. Although, in the past year, some suggestions have been made on the nLab and nForum about possible connections between the syntactic and semantic work, no account of the correspondence was published, as far as the author is aware.

The point of this paper\footnote{This paper is based on the technical report \cite{vakar2014syntax}, where proofs and more discussion can be found. Independently, Krishnaswami et al. \cite{krishnaswamiintegrating} developed a roughly equivalent syntax and gave an operational rather than a denotational semantics. There, type dependency is added to Benton's LNL calculus, rather than to DILL.} is to close this gap between syntax and semantics and to pave the way for a proper semantic analysis of linear type dependency, treating a range of type formers including the crucial $\mathrm{Id}$-types\footnote{To be precise: extensional $\mathrm{Id}$-types. Intensional $\mathrm{Id}$-types remain a topic of investigation, due to the subtlety of dependent elimination rules in a linear setting.}. First, in Section \ref{sec:syn}, we present a syntax, intuitionistic linear dependent type theory (ILDTT), a natural blend of the dual intuitionistic linear logic (DILL) \cite{barber1996dual} and dependent type theory (DTT) \cite{hofmann1997syntax} which generalises both. Second, in Section \ref{sec:sem}, we present a complete categorical semantics, an obvious combination of linear/non-linear adjunctions \cite{barber1996dual} and comprehension categories  \cite{jacobs1993comprehension}. Finally, in Section \ref{sec:dismod}, an important class of models is studied: families with values in a symmetric monoidal category.
\clearpage
\section{Syntax}
\vspace{-7pt}
\label{sec:syn}
We assume the reader has some familiarity with the formal syntax of dependent type theory and linear type theory. In particular, we will not go into syntactic details like $\alpha$-conversion, name binding, capture-free substitution of $a$ for $x$ in $t$ (write $t[a/x]$), and pre-syntax. Details on all of these topics can be found in \cite{hofmann1997syntax}.

We next present the formal syntax of ILDTT. We start with a presentation of the judgements that represent the propositions of the language and then discuss its rules of inference: first its structural core, then the logical rules for a series of optional type formers. We conclude this section with a few basic results about the syntax.

\subsubsection*{Judgements}
We adopt a notation $\Delta;\Xi$ for contexts, where $\Delta$ is `an intuitionistic region' and $\Xi$ is `a linear region', as in DILL \cite{barber1996dual}. The idea is that we have an empty context and can extend an existing context $\Delta;\Xi$ with both intuitionistic and linear types that are allowed to depend on $\Delta$.

Our language expresses judgements of the following six forms.
\vspace{-10pt}
\begin{figure}
\centering
\fbox{\parbox{\textwidth}{\scriptsize
\begin{tabular}{ll}
\textbf{ILDTT judgement} & \textbf{Intended meaning}\vspace{2pt}\\
$\vdash \Delta;\Xi \;\mathrm{ctxt}$ & $\Delta;\Xi$ is a valid context\\
$\Delta;\cdot \vdash A\;\mathrm{type}$ &  $A$ is a type in (intuitionistic) context $\Delta$\\
$\Delta;\Xi\vdash a:A$ & $a$ is a term of type $A$ in context $\Delta;\Xi$\\
$\vdash \Delta;\Xi \equiv \Delta';\Xi'\;\mathrm{ctxt}$\hspace{20pt} & $\Delta;\Xi$ and $\Delta';\Xi'$ are judgementally equal contexts\\
$\Delta;\cdot\vdash A\equiv A'\;\mathrm{type}$ & $A$ and $A'$ are judgementally equal types in (intuitionistic) context $\Delta$\\
$\Delta;\Xi\vdash a\equiv a':A$ & $a$ and $a'$ are judgementally equal terms of type $A$ in context $\Delta;\Xi$
\end{tabular}}}
\normalsize
\caption{Judgements of ILDTT.}
\end{figure}
\vspace{-20pt}

\subsubsection*{Structural Rules}
We will use the following structural rules, which are essentially the structural rules of dependent type theory where some rules appear in both an intuitionistic and a linear form. We present the rules by group, with their names, from left to right and top to bottom.
\vspace{-10pt}
\begin{figure}
\centering
\fbox{\parbox{\textwidth}{\scriptsize
Rules for context formation (C-Emp, Int-C-Ext, Int-C-Ext-Eq, Lin-C-Ext, Lin-C-Ext-Eq):\\
\\
\begin{tabular}{lr}
\AxiomC{}
\RightLabel{}
\UnaryInfC{$\vdash \cdot;\cdot \;\mathrm{ctxt}$}
\DisplayProof
& \\
& \\
\AxiomC{$\vdash\Delta;\Xi\; \mathrm{ctxt}$}
\AxiomC{$\Delta;\cdot \vdash A\;\mathrm{type}$}
\RightLabel{}
\BinaryInfC{$\vdash \Delta,x:A;\Xi \;\mathrm{ctxt}$}
\DisplayProof

&
\hspace{56pt}
\AxiomC{$\vdash \Delta;\Xi\equiv\Delta';\Xi'\;\mathrm{ctxt}$}
\AxiomC{$\Delta;\cdot \vdash A\equiv B\;\mathrm{type}$}
\RightLabel{}
\BinaryInfC{$\vdash \Delta,x:A;\Xi\equiv\Delta',y:B;\Xi'\;\mathrm{ctxt}$}
\DisplayProof\\
& \\
\AxiomC{$\vdash \Delta;\Xi\;\mathrm{ctxt}$}
\AxiomC{$\Delta;\cdot \vdash A\;\mathrm{type}$}
\RightLabel{}
\BinaryInfC{$\vdash \Delta;\Xi,x:A\;\mathrm{ctxt}$}
\DisplayProof

&

\AxiomC{$\vdash \Delta;\Xi\equiv\Delta';\Xi'\;\mathrm{ctxt}$}
\AxiomC{$\Delta;\cdot\vdash A\equiv B\;\mathrm{type}$}
\RightLabel{}
\BinaryInfC{$\vdash \Delta;\Xi,x:A\equiv\Delta';\Xi',y:B\;\mathrm{ctxt}$}
\DisplayProof
\end{tabular}\\
\\
\\
Variable declaration/axiom rules (Int-Var, Lin-Var):\\
\\
\begin{tabular}{lr}
\AxiomC{$\vdash \Delta,x:A,\Delta';\cdot\;\mathrm{ctxt}$}
\RightLabel{}
\UnaryInfC{$\Delta,x:A,\Delta';\cdot\vdash x:A$}
\DisplayProof
&\hspace{188pt}
\AxiomC{$\vdash \Delta;x:A\;\mathrm{ctxt}$}
\RightLabel{}
\UnaryInfC{$\Delta;x:A\vdash x:A$}
\DisplayProof
\end{tabular}}}
\normalsize
\caption{Context formation and variable declaration rules.}
\end{figure}

\vspace{-30pt}
\begin{figure}
\centering
\fbox{\parbox{\textwidth}{\scriptsize
The standard rules expressing that judgemental equality is an equivalence relation (C-Eq-R, C-Eq-S, C-Eq-T, Ty-Eq-R, Ty-Eq-S, Ty-Eq-T, Tm-Eq-R, Tm-Eq-S, Tm-Eq-T):\\
\\
\begin{tabular}{lr}
\AxiomC{$\vdash \Delta;\Xi\;\mathrm{ctxt}$}
\RightLabel{}
\UnaryInfC{$\vdash \Delta;\Xi\equiv \Delta;\Xi\;\mathrm{ctxt}$}
\DisplayProof
&\hspace{65pt}
\AxiomC{$\vdash \Delta;\Xi\equiv \Delta';\Xi'\;\mathrm{ctxt}$}
\RightLabel{}
\UnaryInfC{$\vdash \Delta';\Xi'\equiv \Delta;\Xi\;\mathrm{ctxt}$}
\DisplayProof\\
&\\
\AxiomC{$\vdash \Delta;\Xi\equiv \Delta';\Xi'\;\mathrm{ctxt}$}
\AxiomC{$\vdash \Delta';\Xi'\equiv \Delta'';\Xi''\;\mathrm{ctxt}$}
\RightLabel{}
\BinaryInfC{$\vdash \Delta;\Xi\equiv \Delta'';\Xi''\;\mathrm{ctxt}$}
\DisplayProof &\\
&\\
\AxiomC{$\Delta;\cdot\vdash A\;\mathrm{type}$}
\RightLabel{}
\UnaryInfC{$\Delta;\cdot\vdash A\equiv A\;\mathrm{type}$}
\DisplayProof
&
\AxiomC{$\Delta;\cdot\vdash A\equiv A'\;\mathrm{type}$}
\RightLabel{}
\UnaryInfC{$\Delta;\cdot\vdash A'\equiv A\;\mathrm{type}$}
\DisplayProof\\
&\\
\AxiomC{$\Delta;\cdot\vdash A\equiv A'\;\mathrm{type}$}
\AxiomC{$\Delta;\cdot\vdash A'\equiv A''\;\mathrm{type}$}
\RightLabel{}
\BinaryInfC{$\Delta;\cdot\vdash A\equiv A''\;\mathrm{type}$}
\DisplayProof
&\\
&\\
\AxiomC{$\Delta;\Xi\vdash a:A$}
\RightLabel{}
\UnaryInfC{$\Delta;\Xi\vdash a\equiv a: A$}
\DisplayProof
&
\AxiomC{$\Delta;\Xi\vdash a\equiv a':A$}
\RightLabel{}
\UnaryInfC{$\Delta;\Xi\vdash a'\equiv a: A$}
\DisplayProof
\\
&\\
\AxiomC{$\Delta;\Xi\vdash a\equiv a':A$}
\AxiomC{$\Delta;\Xi\vdash a'\equiv a'':A$}
\RightLabel{}
\BinaryInfC{$\Delta;\Xi\vdash a\equiv a'': A$}
\DisplayProof
\end{tabular}
\\
\\
\\
The standard rules relating typing and judgemental equality (Tm-Conv, Ty-Conv):\\
\\
\begin{tabular}{ll}
\AxiomC{$\Delta;\Xi\vdash a:A$}
\AxiomC{$\vdash \Delta;\Xi\equiv \Delta';\Xi'\;\mathrm{ctxt}$}
\AxiomC{$\Delta;\cdot \vdash A\equiv A'\;\mathrm{type}$}
\RightLabel{}
\TrinaryInfC{$\Delta';\Xi'\vdash a:A'$}
\DisplayProof &\\
&\\
\AxiomC{$\Delta;\cdot\vdash A\;\mathrm{type}$}
\AxiomC{$\vdash \Delta;\cdot\equiv \Delta';\cdot\;\mathrm{ctxt}$}
\RightLabel{}
\BinaryInfC{$\Delta';\cdot\vdash A\;\mathrm{type}$}
\DisplayProof
\end{tabular}
\normalsize}}
\caption{A few standard rules for judgemental equality.}
\end{figure}
\begin{figure}
\centering
\fbox{\parbox{\textwidth}{\scriptsize
Exchange, weakening, and substitution rules (Int-Weak, Int-Exch, Lin-Exch, Int-Ty-Subst, Int-Ty-Subst-Eq, Int-Tm-Subst, Int-Tm-Subst-Eq, Lin-Tm-Subst, Lin-Tm-Subst-Eq):\\
\\
\begin{tabular}{lr}
\AxiomC{$\Delta,\Delta';\Xi\vdash\mathcal{J}$}
\AxiomC{$\Delta;\cdot\vdash A\;\mathrm{type}$}
\RightLabel{}
\BinaryInfC{$\Delta,x:A,\Delta';\Xi\vdash \mathcal{J}$}
\DisplayProof
&\\
&\\
&\\
\AxiomC{$\Delta,x:A,x':A',\Delta';\Xi\vdash \mathcal{J}$}
\RightLabel{}
\UnaryInfC{$\Delta,x':A',x:A,\Delta';\Xi\vdash \mathcal{J}$}
\DisplayProof
&\vspace{4pt}
\AxiomC{$\Delta;\Xi,x:A,x':A',\Xi'\vdash \mathcal{J}$}
\RightLabel{}
\UnaryInfC{$\Delta;\Xi,x':A',x:A,\Xi'\vdash \mathcal{J}$}
\DisplayProof
\\
(if $x$ is not free in $A'$)&\\
&\\
\AxiomC{${\Delta},x:A,\Delta';\cdot \vdash B\;\mathrm{type}$}
\AxiomC{$\Delta;\cdot \vdash a:A$}
\RightLabel{}
\BinaryInfC{${\Delta},\Delta'[{a}/x];\cdot \vdash B[{a}/x]\;\mathrm{type}$}
\DisplayProof
&\hspace{13pt}
\AxiomC{${\Delta},x:A,\Delta';\cdot \vdash B\equiv B'\;\mathrm{type}$}
\AxiomC{$\Delta;\cdot \vdash a:A$}
\RightLabel{}
\BinaryInfC{${\Delta},\Delta'[{a}/x];\cdot \vdash B[{a}/x]\equiv B'[{a}/x]\;\mathrm{type}$}
\DisplayProof\\
&\\
&\\
\AxiomC{${\Delta},x:A,\Delta';\Xi \vdash b:B$}
\AxiomC{$\Delta;\cdot \vdash a:A$}
\RightLabel{}
\BinaryInfC{${\Delta},\Delta'[{a}/x];\Xi[{a}/x] \vdash b[{a}/x]:B[{a}/x]$}
\DisplayProof
&
\AxiomC{${\Delta},x:A,\Delta';\Xi \vdash b\equiv b':B$}
\AxiomC{$\Delta;\cdot \vdash a:A$}
\RightLabel{}
\BinaryInfC{${\Delta},\Delta'[{a}/x];\Xi[{a}/x] \vdash b[{a}/x]\equiv b'[{a}/x]:B[{a}/x]$}
\DisplayProof
\\
&\\
&\\
\AxiomC{$\Delta;\Xi,x:A\vdash b:B$}
\AxiomC{$\Delta;\Xi'\vdash a:A$}
\RightLabel{}
\BinaryInfC{$\Delta;\Xi,\Xi'\vdash b[a/x]:B$}
\DisplayProof
&
\AxiomC{$\Delta;\Xi,x:A\vdash b\equiv b':B$}
\AxiomC{$\Delta;\Xi'\vdash a:A$}
\RightLabel{}
\BinaryInfC{$\Delta;\Xi,\Xi'\vdash b[a/x]\equiv b'[a/x]:B$}
\DisplayProof\\
&\\
&\\
\end{tabular}\\
\vspace{-15pt}
\normalsize}}
\caption{Exchange, weakening, and substitution rules. Here, $\mathcal{J}$ represents a statement of the form $B\;\mathrm{type}$, $B\equiv B'$, $b:B$, or $b\equiv b':B$, such that all judgements are well-formed.}
\end{figure}
\vspace{-30pt}
\clearpage
\subsubsection*{Logical Rules} We describe some (optional) type and term formers, for which we give type formation (denoted -F), introduction (-I), elimination (-E), computation rules (-C), and (judgemental) uniqueness principles (-U). We also assume the obvious rules stating that the type formers and term formers respect judgemental equality. Moreover, $\Sigma_{!x:!A}$, $\Pi_{!x:!A}$, $\lambda_{!x:!A}$, and $\lambda_{x:A}$ are name binding operators, binding free occurrences of $x$ within their scope.

We demand -U-rules for the various type formers in this paper, as this allows us to give a natural categorical semantics. This includes $\mathrm{Id}$-types: we study extensional identity types. In practice, when building a computational implementation of a type theory like ours, one would probably drop some of these rules to make the system decidable, which would correspond to switching to weak equivalents of the categorical constructions presented here.\footnote{In that case, in DTT, one would usually demand some stronger `dependent' elimination rules, which would make propositional equivalents of the -U-rules provable, adding some extensionality to the system, while preserving its computational properties. Such rules are problematic in ILDTT, however, both from a syntactic and semantic point of view and further investigation is warranted here.}
\vspace{-15pt}
\begin{figure}
\centering
\fbox{
\parbox{\textwidth}{\scriptsize
\begin{tabular}{lr}\begin{tabular}{l}
\AxiomC{$\Delta,x:A;\cdot\vdash B\;\mathrm{type}$}
\RightLabel{}
\UnaryInfC{$\Delta;\cdot\vdash \Sigma_{!x:{!A}}B\;\mathrm{type}$}
\DisplayProof\\
\\
\AxiomC{$\Delta;\cdot \vdash a:A$}
\AxiomC{$\Delta;\Xi \vdash b:B[{a}/x]$}
\RightLabel{}
\BinaryInfC{$\Delta ; \Xi \vdash  !{a} \otimes b:\Sigma_{!x:{!A}}B $}
\DisplayProof
\end{tabular}
&
\AxiomC{$\Delta;\cdot \vdash C\;\mathrm{type}$}
\noLine
\UnaryInfC{$\Delta;\Xi \vdash t:\Sigma_{!x:{!A}}B$}
\noLine
\UnaryInfC{$\Delta,x:A;\Xi',y:B\vdash c:C$}
\RightLabel{}
\UnaryInfC{$\Delta;\Xi,\Xi' \vdash \mathrm{let}\;t\;\mathrm{be}\;  !{x} \otimes y \;\mathrm{in}\;c:C$}
\DisplayProof 
\\[-2pt]
&{\scriptsize(if $\vdash\Delta;\Xi'\;\mathrm{ctxt}$)}
\\
\\
\AxiomC{$ \Delta;\Xi\vdash \mathrm{let}\; !{a} \otimes b\;\mathrm{be}\;  !{x} \otimes y \;\mathrm{in}\;c:C$}
\RightLabel{}
\UnaryInfC{$\Delta;\Xi\vdash \mathrm{let}\; !{a} \otimes b\;\mathrm{be}\; ! {x} \otimes y \;\mathrm{in}\;c\equiv c[{a}/x,b/y]:C$}
\DisplayProof\hspace{-10pt}
&
\hspace{-3pt}
\AxiomC{$ \Delta;\Xi\vdash \mathrm{let}\;t\;\mathrm{be}\;  !{x} \otimes y \;\mathrm{in}\; !{x} \otimes y:\Sigma_{!x:{!A}}B$}
\RightLabel{}
\UnaryInfC{$\Delta;\Xi\vdash \mathrm{let}\;t\;\mathrm{be}\; ! {x} \otimes y \;\mathrm{in}\; !{x} \otimes y\equiv t:\Sigma_{!x:{!A}}B$}
\DisplayProof
\\
&\\
&\\
\AxiomC{$\Delta,x:A;\cdot \vdash B\;\mathrm{type}$}
\RightLabel{}
\UnaryInfC{$\Delta;\cdot\vdash\Pi_{!x:!{A}}B\;\mathrm{type}$}
\DisplayProof
&\\
&\\
\AxiomC{$\vdash \Delta;\Xi\;\mathrm{ctxt}$}
\AxiomC{$\Delta,x:A;\Xi\vdash b:B$}
\RightLabel{}
\BinaryInfC{$\Delta;\Xi\vdash \lambda_{!x:{!A}}b:\Pi_{!x:!{A}}B$}
\DisplayProof
&
\AxiomC{$\Delta;\cdot \vdash a:A$}
\AxiomC{$\Delta;\Xi\vdash f:\Pi_{!x:{!A}}B$}
\RightLabel{}
\BinaryInfC{$\Delta;\Xi\vdash f(!{a}):B[{a}/x]$}
\DisplayProof\\
&\\
&\\
\AxiomC{$\Delta;\Xi\vdash (\lambda_{!x:{!A}}b)(!{a}):B[{a}/x]$}
\RightLabel{}
\UnaryInfC{$\Delta;\Xi\vdash (\lambda_{!x:!{A}}b)(!{a})\equiv b[{a}/x]:B[{a}/x]$}
\DisplayProof
&
\AxiomC{$\Delta;\Xi\vdash \lambda_{!x:{!A}}f(!x):\Pi_{!x:{!A}}B$}
\RightLabel{}
\UnaryInfC{$\Delta;\Xi\vdash f\equiv \lambda_{!x:{!A}}f(!x):\Pi_{!x:{!A}}B$}
\DisplayProof
\\
&\\
&\\
\begin{tabular}{l}
\AxiomC{$\Delta;\cdot \vdash a:A$}
\AxiomC{$\Delta;\cdot \vdash a':A$}
\RightLabel{}
\BinaryInfC{$\Delta;\cdot \vdash \mathrm{Id}_{!A}(a,a')\;\mathrm{type}$}
\DisplayProof \\
\\
\AxiomC{$\Delta;\cdot \vdash a:A$}
\RightLabel{}
\UnaryInfC{$\Delta;\cdot \vdash \mathrm{refl}_{!a}:\mathrm{Id}_{!A}(a,a)$}
\DisplayProof
\end{tabular}\hspace{-64pt}
&\hspace{-62pt} \begin{tabular}{l}
\hspace{75pt}$\Delta,x:A,x':A;\cdot \vdash D\;\mathrm{type}$\\
\hspace{75pt}$\Delta,z:A;\Xi \vdash d:D[{z}/x,{z}/x']$\\
\hspace{75pt}$\Delta ;\cdot \vdash a:A$\\
\hspace{75pt}$\Delta;\cdot \vdash a':A$\\
\AxiomC{$\Delta ;\Xi' \vdash p:\mathrm{Id}_{!A}(a,a')$}
\RightLabel{}
\UnaryInfC{$\Delta;\Xi[a/z],\Xi' \vdash \mathrm{let}\; (a,a',p)\;\mathrm{be}\;(z,z,\mathrm{refl}_{!z})\;\mathrm{in}\; d:D[{a}/x,{a'}/x']$}
\DisplayProof
\end{tabular}
\\
&\\
&\\
\AxiomC{$\Delta;\Xi[{a}/z]\vdash \mathrm{let}\; (a,a,\mathrm{refl}_{!a})\;\mathrm{be}\;(z,z,\mathrm{refl}_{!z})\;\mathrm{in}\; d:D[{a}/x,{a}/x']$}
\RightLabel{}
\UnaryInfC{$\Delta;\Xi[{a}/z]\vdash \mathrm{let}\; (a,a,\mathrm{refl}_{!a})\;\mathrm{be}\;(z,z,\mathrm{refl}_{!z})\;\mathrm{in}\; d\equiv d[{a}/z] :D[{a}/x,{a}/x']$}
\DisplayProof \hspace{-85pt}
&\hspace{-43pt}
\end{tabular}
\vspace{8pt}\\
\begin{tabular}{lr}
\AxiomC{$\Delta,x:A,x':A;\Xi, z:\mathrm{Id}_{!A}(x,x') \vdash \mathrm{let}\;(x,x',z)\;\mathrm{be}\;(x,x,\mathrm{refl}_{!x})\;\mathrm{in}\;c[x/x',\mathrm{refl}_{!x}/z]:C$}
\RightLabel{}
\UnaryInfC{$\Delta,x:A,x':A;\Xi, z:\mathrm{Id}_{!A}(x,x') \vdash \mathrm{let}\;(x,x',z)\;\mathrm{be}\;(x,x,\mathrm{refl}_{!x})\;\mathrm{in}\;c[x/x',\mathrm{refl}_{!x}/z]\equiv c:C$}
\DisplayProof 
\end{tabular}
\normalsize
}}
\caption{Rules for linear equivalents of some of the usual type formers from DTT ($\Sigma$-F, -I, -E, -C, -U, $\Pi$-F, -I, -E, -C, -U, $\mathrm{Id}$-F, -I, -E, -C, -U).}
\vspace{-10pt}
\end{figure}
\nopagebreak
\begin{figure}
\centering
\fbox{\parbox{\textwidth}{\scriptsize 
\begin{tabular}{lr}
\AxiomC{}
\RightLabel{}
\UnaryInfC{$\Delta;\cdot\vdash I\;\mathrm{type}$}
\DisplayProof
&\\
\AxiomC{}
\RightLabel{}
\UnaryInfC{$\Delta;\cdot\vdash *:I$}
\DisplayProof
&
\AxiomC{$\Delta;\Xi'\vdash t:I$}
\AxiomC{$\Delta;\Xi\vdash a:A$}
\RightLabel{}
\BinaryInfC{$\Delta;\Xi,\Xi'\vdash \mathrm{let}\;t\;\mathrm{be}\;*\;\mathrm{in}\;a:A$}
\DisplayProof 
 \\
&\\
\AxiomC{$\Delta;\Xi\vdash  \mathrm{let}\;*\;\mathrm{be}\;*\;\mathrm{in}\;a:A$}
\RightLabel{}
\UnaryInfC{$\Delta;\Xi\vdash \mathrm{let}\;*\;\mathrm{be}\;*\;\mathrm{in}\;a\equiv a :A$}
\DisplayProof
&
\AxiomC{$\Delta;\Xi\vdash \mathrm{let}\;t\;\mathrm{be}\;*\;\mathrm{in}\;*:I$}
\RightLabel{}
\UnaryInfC{$\Delta;\Xi\vdash \mathrm{let}\;t\;\mathrm{be}\;*\;\mathrm{in}\;*\equiv t :I$}
\DisplayProof 
\\
&\\
\AxiomC{$\Delta;\cdot \vdash A\;\mathrm{type}$}
\AxiomC{$\Delta;\cdot \vdash B\;\mathrm{type}$}
\RightLabel{}
\BinaryInfC{$\Delta;\cdot \vdash A\otimes B\;\mathrm{type}$}
\DisplayProof 
&\\
&\\
\AxiomC{$\Delta;\Xi\vdash a:A$}
\AxiomC{$\Delta;\Xi'\vdash b:B$}
\RightLabel{}
\BinaryInfC{$\Delta;\Xi,\Xi'\vdash a\otimes b:A\otimes B$}
\DisplayProof
&
\AxiomC{$\Delta;\Xi\vdash t:A\otimes B$}
\AxiomC{$\Delta;\Xi',x:A,y:B\vdash c:C$}
\RightLabel{}
\BinaryInfC{$\Delta;\Xi,\Xi'\vdash \mathrm{let}\; t\;\mathrm{be}\;x\otimes y\;\mathrm{in} \; c:C$}
\DisplayProof
\\
&\\
\AxiomC{$\Delta;\Xi \vdash \mathrm{let}\; a\otimes b \;\mathrm{be}\;x\otimes y\;\mathrm{in} \; c :C$}
\RightLabel{}
\UnaryInfC{$\Delta;\Xi \vdash \mathrm{let}\; a\otimes b \;\mathrm{be}\;x\otimes y\;\mathrm{in} \; c \equiv c[a/x,b/y]:C$}
\DisplayProof \hspace{-11pt}
&
\AxiomC{$\Delta;\Xi \vdash \mathrm{let}\; t \;\mathrm{be}\;x\otimes y\;\mathrm{in} \; x\otimes y :A\otimes B$}
\RightLabel{}
\UnaryInfC{$\Delta;\Xi \vdash \mathrm{let}\; t \;\mathrm{be}\;x\otimes y\;\mathrm{in} \; x\otimes y\equiv t:A\otimes B$}
\DisplayProof
\\
&\\
&\\
\AxiomC{$\Delta;\cdot \vdash A\;\mathrm{type}$}
\AxiomC{$\Delta;\cdot \vdash B\;\mathrm{type}$}
\RightLabel{}
\BinaryInfC{$\Delta;\cdot \vdash A\multimap B\;\mathrm{type}$}
\DisplayProof
&\\
&\\
\AxiomC{$\Delta;\Xi,x:A\vdash b:B$}
\RightLabel{}
\UnaryInfC{$\Delta;\Xi\vdash \lambda_{x:A}b:A\multimap B$}
\DisplayProof
&
\AxiomC{$\Delta;\Xi\vdash f:A\multimap B$}
\AxiomC{$\Delta;\Xi'\vdash a:A$}
\RightLabel{}
\BinaryInfC{$\Delta;\Xi,\Xi'\vdash f(a):B$}
\DisplayProof
\\
&\\
\AxiomC{$\Delta;\Xi \vdash (\lambda_{x:A}b)(a):B$}
\RightLabel{}
\UnaryInfC{$\Delta;\Xi\vdash (\lambda_{x:A}b)(a)\equiv b[a/x]:B$}
\DisplayProof
&
\AxiomC{$\Delta;\Xi \vdash \lambda_{x:A}fx:A\multimap B$}
\RightLabel{}
\UnaryInfC{$\Delta;\Xi\vdash \lambda_{x:A}fx\equiv f:A\multimap B$}
\DisplayProof
\end{tabular}
\\
\\
\\
\begin{tabular*}{\textwidth}{lcr}
\AxiomC{}
\RightLabel{}
\UnaryInfC{$\Delta;\cdot\vdash \top\;\mathrm{type}$}
\DisplayProof
\hspace{76pt}
&
\AxiomC{$\vdash \Delta;\Xi\;\mathrm{ctxt}$}
\RightLabel{}
\UnaryInfC{$\Delta;\Xi\vdash \langle\rangle:\top$}
\DisplayProof
\hspace{76pt}
&
\AxiomC{$\Delta;\Xi\vdash t:\top$}
\RightLabel{}
\UnaryInfC{$\Delta;\Xi\vdash t\equiv\langle\rangle:\top$}
\DisplayProof
\end{tabular*}
\\
\\
\\
\begin{tabular}{lr}
\AxiomC{$\Delta;\cdot \vdash A\;\mathrm{type}$}
\AxiomC{$\Delta;\cdot \vdash B\;\mathrm{type}$}
\RightLabel{}
\BinaryInfC{$\Delta;\cdot \vdash A\& B\;\mathrm{type}$}
\DisplayProof
\hspace{96pt}
&
\AxiomC{$\Delta;\Xi\vdash a:A$}
\AxiomC{$\Delta;\Xi\vdash b:B$}
\RightLabel{}
\BinaryInfC{$\Delta;\Xi\vdash \langle a, b\rangle:A\& B$}
\DisplayProof
\\
&\\
\AxiomC{$\Delta;\Xi\vdash t:A\& B$}
\RightLabel{}
\UnaryInfC{$\Delta;\Xi\vdash \mathrm{fst}(t):A$}
\DisplayProof
&
\AxiomC{$\Delta;\Xi\vdash t:A\& B$}
\RightLabel{}
\UnaryInfC{$\Delta;\Xi\vdash \mathrm{snd}(t):B$}
\DisplayProof
\\
&\\
\AxiomC{$\Delta;\Xi \vdash \mathrm{fst}(\langle a,b\rangle ):A$}
\RightLabel{}
\UnaryInfC{$\Delta;\Xi \vdash \mathrm{fst}(\langle a,b\rangle )\equiv a :A$}
\DisplayProof 
&
\AxiomC{$\Delta;\Xi \vdash \mathrm{snd}(\langle a,b\rangle):B$}
\RightLabel{}
\UnaryInfC{$\Delta;\Xi \vdash \mathrm{snd}(\langle a,b\rangle)\equiv b:B $}
\DisplayProof \\
&\\
&\\
\AxiomC{$\Delta;\Xi\vdash \langle\mathrm{fst}(t),\mathrm{snd}(t) \rangle:A\& B$}
\RightLabel{}
\UnaryInfC{$\Delta;\Xi\vdash \langle\mathrm{fst}(t),\mathrm{snd}(t) \rangle\equiv t:A\& B$}
\DisplayProof
&\\
&\\
\end{tabular}
\begin{tabular}{lcr}
\AxiomC{}
\RightLabel{}
\UnaryInfC{$\Delta ; \cdot\vdash 0\;\mathrm{type}$}
\DisplayProof
\hspace{54pt}
&
\AxiomC{$\Delta;\Xi\vdash t:0$}
\RightLabel{}
\UnaryInfC{$\Delta;\Xi,\Xi'\vdash \mathrm{false}(t) :B$}
\DisplayProof
\hspace{53pt}
&
\AxiomC{$\Delta;\Xi\vdash t:0$}
\RightLabel{}
\UnaryInfC{$\Delta;\Xi\vdash \mathrm{false}(t)\equiv t :0$}
\DisplayProof
\end{tabular}
\\
\\
\\
\begin{tabular}{lr}
\AxiomC{$\Delta;\cdot \vdash A\;\mathrm{type}$}
\AxiomC{$\Delta;\cdot \vdash B\;\mathrm{type}$}
\RightLabel{}
\BinaryInfC{$\Delta;\cdot \vdash A\oplus B\;\mathrm{type}$}
\DisplayProof
& \\
& \\
\AxiomC{$\Delta ;\Xi\vdash a: A$}
\RightLabel{}
\UnaryInfC{$\Delta;\Xi\vdash \mathrm{inl}(a): A\oplus B$}
\DisplayProof
&
\hspace{7pt}
\AxiomC{$\Delta ;\Xi\vdash b: B$}
\RightLabel{}
\UnaryInfC{$\Delta;\Xi\vdash \mathrm{inr}(b): A\oplus B$}
\DisplayProof
\\
&\\
&\\
\AxiomC{$\Delta ;\Xi,x:A\vdash c: C$}
\AxiomC{$\Delta ;\Xi,y:B\vdash d: C$}
\AxiomC{$\Delta ;\Xi'\vdash t:A\oplus B$}
\RightLabel{}
\TrinaryInfC{$\Delta;\Xi,\Xi'\vdash \mathrm{case}\; t\;\mathrm{of}\;\mathrm{inl}(x)\rightarrow c\;||\;\mathrm{inr}(y)\rightarrow d :C$}
\DisplayProof
&\\
&\\
\AxiomC{$\Delta;\Xi,\Xi'\vdash \mathrm{case}\; \mathrm{inl}(a)\;\mathrm{of}\;\mathrm{inl}(x)\rightarrow c\;||\;\mathrm{inr}(y)\rightarrow d :C$}
\RightLabel{}
\UnaryInfC{$\Delta;\Xi,\Xi'\vdash \mathrm{case}\; \mathrm{inl}(a)\;\mathrm{of}\;\mathrm{inl}(x)\rightarrow c\;||\;\mathrm{inr}(y)\rightarrow d \equiv c[a/x]:C$}
\DisplayProof
&\\
&\\
\AxiomC{$\Delta;\Xi,\Xi'\vdash \mathrm{case}\; \mathrm{inr}(b)\;\mathrm{of}\;\mathrm{inl}(x)\rightarrow c\;||\;\mathrm{inr}(y)\rightarrow d :C$}
\RightLabel{}
\UnaryInfC{$\Delta;\Xi,\Xi'\vdash \mathrm{case}\; \mathrm{inr}(b)\;\mathrm{of}\;\mathrm{inl}(x)\rightarrow c\;||\;\mathrm{inr}(y)\rightarrow d \equiv d[b/y]:C$}
\DisplayProof &\\
&\\
\AxiomC{$\Delta;\Xi\vdash \mathrm{case}\; t\;\mathrm{of}\;\mathrm{inl}(x)\rightarrow \mathrm{inl}(x)\;||\;\mathrm{inr}(y)\rightarrow \mathrm{inr}(y):A\oplus B $}
\RightLabel{}
\UnaryInfC{$\Delta;\Xi\vdash \mathrm{case}\; t\;\mathrm{of}\;\mathrm{inl}(x)\rightarrow \mathrm{inl}(x)\;||\;\mathrm{inr}(y)\rightarrow \mathrm{inr}(y)\equiv t:A\oplus B$}
\DisplayProof
\end{tabular}}}
\end{figure}
\clearpage
\begin{figure}
\centering
\fbox{\parbox{\textwidth}{\scriptsize
\begin{tabular}{lr}
\AxiomC{$\Delta;\cdot\vdash A\;\mathrm{type}$}
\RightLabel{}
\UnaryInfC{$\Delta;\cdot \vdash !A\;\mathrm{type}$}
\DisplayProof
& \\
& \\
\AxiomC{$\Delta;\cdot \vdash a:A$}
\RightLabel{}
\UnaryInfC{$\Delta;\cdot\vdash !a:!A$}
\DisplayProof
&\hspace{62pt}
\AxiomC{$\Delta;\Xi\vdash t:!A$}
\AxiomC{$\Delta,x:A;\Xi'\vdash b:B$}
\RightLabel{}
\BinaryInfC{$\Delta;\Xi,\Xi'\vdash \mathrm{let}\; t\;\mathrm{be}\;!x\; \mathrm{in}\; b:B$}
\DisplayProof
\\[-2pt]
&\hspace{18pt}{\scriptsize(if $\vdash\Delta;\Xi'\;\mathrm{ctxt}$ and $\Delta;\cdot \vdash B\;\mathrm{type}$)}
\\
&\\
\AxiomC{$\Delta;\Xi\vdash\mathrm{let}\; !a\;\mathrm{be}\;!x\; \mathrm{in}\; b:B$}
\RightLabel{}
\UnaryInfC{$\Delta;\Xi\vdash \mathrm{let}\; !a\;\mathrm{be}\;!x\; \mathrm{in}\; b\equiv b[{a}/x]:B$}
\DisplayProof &
\AxiomC{$\Delta;\Xi\vdash \mathrm{let}\;t\;\mathrm{be}\;!x\; \mathrm{in}\; !x:!A$}
\RightLabel{}
\UnaryInfC{$\Delta;\Xi\vdash \mathrm{let}\;t\;\mathrm{be}\;!x\; \mathrm{in}\; !x \equiv t:!A$}
\DisplayProof 
\end{tabular}
\normalsize
}}
\caption{Rules for the usual linear type formers in each context ($I$-F, -I, -E, -C, -U, $\otimes$-F, -I, -E, -C, -U, $\multimap$-F, -I, -E, -C, -U, $\top$-F, -I, -U, $\&$-F, -I, -E1, -E2, -C1, -C2, -U, $0$-F, -E, -U, $\oplus$-F, -I1, -I2, -E, -C1, -C2, -U, $!$-F, -I, -E, -C, -U).}
\end{figure}
\vspace{-10pt}
Finally, we add rules asserting all possible commuting conversions, which, from a syntactic point of view, restore the subformula property and, from a semantic point of view, say that our rules are natural transformations (between hom-functors), which simplifies the categorical semantics significantly. We represent these schematically, following \cite{barber1996dual}. That is, if $C[-]$ is a linear program context, i.e. a context built without using $!$, then (abusing notation and dealing with all the $\lbi{}{}{}$-constructors at once) the following rules hold.\vspace{-15pt}
\begin{figure}
\centering
\fbox{\parbox{\textwidth}{\scriptsize
\begin{tabular}{lr}
\AxiomC{$\Delta ;\Xi \vdash C[\lbi{a}{b}{c}]:D$}
\UnaryInfC{$\Delta ;\Xi \vdash C[\lbi{a}{b}{c}]\equiv \lbi{a}{b}{C[c]}:D$}
\DisplayProof
\quad

&\hspace{7pt}

\AxiomC{$\Delta;\Xi\vdash C[\mathrm{false}(t)]:D$}
\UnaryInfC{$\Delta;\Xi\vdash C[\mathrm{false}(t)]\equiv\mathrm{false}(t):D$}
\DisplayProof
\quad
\\
\\
if $C[-]$ does not bind any free variables in $a$ or $b$; &\hspace{7pt} if $C[-]$ does not bind any free variables in $t$;\\
\\
\\
\end{tabular}
\AxiomC{$\Delta;\Xi\vdash C[\mathrm{case}\;t\;\mathrm{of}\;\mathrm{inl}(x)\rightarrow c\; ||\; \mathrm{inr}(y)\rightarrow d] :D$}
\UnaryInfC{$\Delta;\Xi\vdash C[\mathrm{case}\;t\;\mathrm{of}\;\mathrm{inl}(x)\rightarrow c\; ||\; \mathrm{inr}(y)\rightarrow d]\equiv \mathrm{case}\;t\;\mathrm{of}\;\mathrm{inl}(x)\rightarrow C[c]\; ||\; \mathrm{inr}(y)\rightarrow C[d] :D$}
\DisplayProof
\quad\\
\\
if $C[-]$ does not bind any free variables in $t$ or $x$ or $y$.
\normalsize
}}
\caption{Commuting conversions.}
\end{figure}
\vspace{-20pt}
\begin{remark}Note that all type formers that are defined context-wise ($I$, $\otimes$, $\multimap$, $\top$, $\&$, $0$, $\oplus$, and $!$) are automatically preserved under the substitutions from Int-Ty-Subst (up to canonical isomorphism\footnote{By an isomorphism of types $\Delta;\cdot\vdash A\;\mathrm{type}$ and $\Delta;\cdot\vdash B\;\mathrm{type}$ in context $\Delta$, we mean a pair of terms $\Delta;x:A\vdash f:B$ and $\Delta;y:B\vdash g:A$ together with a pair of judgemental equalities $\Delta;x:A\vdash g[f/y]\equiv x:A$ and $\Delta;y:B\vdash f[g/x]\equiv y:B$.}), in the sense that $F(A_1,\ldots, A_n)[{a}/x]$ is isomorphic to $F(A_1[{a}/x],\ldots,A_n[{a}/x])$ for an $n$-ary type former $F$. Similarly, for $T=\Sigma$ or $\Pi$, we have that $(T_{!y:!B}C)[{a}/x]$ is isomorphic to $T_{!y:!B[{a}/x]}C[{a}/x]$ and $(\mathrm{Id}_{!B}(b,b'))[a/x]$ is isomorphic to $\mathrm{Id}_{!{B[a/x]}}(b[a/x],b'[a/x])$. This gives us Beck-Chevalley conditions in the categorical semantics.\end{remark}
\begin{remark}
The reader may note that the usual formulation of universes for DTT transfers naturally to ILDTT, giving us a notion of universes for linear types. This allows us to write rules for forming types as rules for forming terms, as usual. We do not take this approach and define the various type formers in the setting without universes.
\end{remark}
\clearpage
\subsubsection*{Some Basic Results} As the focus of this paper is the syntax-semantics correspondence, we will briefly state only a few syntactic results. For standard metatheoretic properties of (a system equivalent to) the $\multimap,\Pi,\top,\&$-fragment of our syntax, we refer the reader to \cite{cervesato1996linear}. Standard techniques and some small adaptations of the system should suffice to extend the results to all of ILDTT.

We only record that ILDTT is consistent both as a type theory (that is, not for all $\Delta;\Xi\vdash a,a':A$ do we have $\Delta;\Xi\vdash a\equiv a':A$) and as a logic (ILDTT does not prove that every type is inhabited).
\begin{theorem}[Consistency] ILDTT with all its type formers is consistent, both as a type theory and as a logic.\end{theorem}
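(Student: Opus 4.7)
The plan is to deduce both statements from soundness of the categorical semantics to be developed in section \ref{sec:sem}, applied to a concrete model in which the desired separations visibly hold. The model I would use is an instance of the families construction studied in section \ref{sec:dismod}, namely $\mathrm{Fam}(\mathbf{Set})$ over the symmetric monoidal category $(\mathbf{Set},\times,1)$, which supports all of the type formers listed above once the results of section \ref{sec:dismod} are in place. Soundness then guarantees that any judgemental equality derivable in ILDTT is reflected by an actual equality of interpretations in this model, and any closed term of some type yields an actual global element of its denotation.

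For logical consistency I would interpret the type $0$ in the empty context: in $\mathrm{Fam}(\mathbf{Set})$ its denotation over the terminal context is essentially the empty set, which admits no morphism from the monoidal unit, so by soundness there can be no closed term $\cdot;\cdot \vdash a:0$. For consistency as a type theory I would consider $I \oplus I$ in the empty context; its denotation is a two-element set, and the interpretations of $\mathrm{inl}(*)$ and $\mathrm{inr}(*)$ are the two distinct injections, so soundness rules out the judgemental equality $\cdot;\cdot\vdash \mathrm{inl}(*) \equiv \mathrm{inr}(*):I\oplus I$. Either witness suffices by itself, but I would record both, as they address the two halves of the statement directly.

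The real work is not in the two separations above but in justifying that $\mathrm{Fam}(\mathbf{Set})$ genuinely satisfies every structural demand of ILDTT: the linear--non-linear adjunction, the comprehension-category structure with its Beck--Chevalley conditions, extensional $\mathrm{Id}$-types, and the multiplicative quantifiers $\Sigma_{!x:!A}$ and $\Pi_{!x:!A}$. I expect the delicate points to be the interaction of comprehension with the monoidal structure on the fibres and the verification of the uniqueness principles; both are precisely what sections \ref{sec:sem} and \ref{sec:dismod} are designed to establish, and once they are in hand the consistency theorem drops out as an immediate corollary. A lighter-weight backup, if one wished to decouple the theorem from the full models chapter, would be to interpret only the sub-language generated by $0$, $I$, and $\oplus$ in a bare bicartesian symmetric monoidal category, which already provides both required separations without invoking any of the dependent infrastructure.
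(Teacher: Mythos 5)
Your proposal is correct and takes essentially the same route as the paper: the paper's own (one-line) proof sketch also reduces consistency to soundness plus the existence of a model, noting that every model of DTT is a degenerate model of ILDTT, and your $\mathrm{Fam}(\mathbf{Set})$ with the Cartesian monoidal structure is precisely such a standard DTT model. You merely make the argument more explicit by exhibiting the concrete separating witnesses ($0$ denoting $\emptyset$, and the two distinct injections into $1+1$), which the paper leaves implicit.
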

\begin{proof}[sketch] This follows from model-theoretic considerations. Later, in Section \ref{sec:sem}, we shall see that our model theory encompasses that of DTT, for which we have models exhibiting both types of consistency.\end{proof}

To give the reader some intuition for these linear $\Pi$- and $\Sigma$-types, we suggest the following two interpretations.

\begin{theorem}[$\Pi$ and $\Sigma$ as Dependent $!(-)\multimap(-)$ and $!(-)\otimes(-)$] Suppose we have $!$-types. Let $\Delta,x:A;\cdot \vdash B\;\mathrm{type}$, where $x$ is not free in $B$. Then,
\begin{enumerate} 
\item $\Pi_{!x:{!A}}B$ is isomorphic to $!A\multimap B$, if we have $\Pi$-types and $\multimap$-types;
\item $\Sigma_{!x:{!A}}B$ is isomorphic to $!A\otimes B$, if we have $\Sigma$-types and $\otimes$-types.
\end{enumerate}
\end{theorem}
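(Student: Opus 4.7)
The plan is to construct explicit terms witnessing both isomorphisms and to verify the inverse equations using the $\beta$- and $\eta$-rules of the relevant type formers together with the commuting conversions from the end of section \ref{sec:syn}. The key simplification is that, since $x$ is not free in $B$, the eliminators for the linear $\Sigma$ and $\Pi$ can be read as destructors in a non-dependent body, so the two sides line up neatly term by term.

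For part (1), I would take
\[
\phi(f) := \lambda_{y:!A}\,\lbi{y}{!x}{f(!x)}\qquad\text{and}\qquad \psi(g) := \lambda_{!x:!A}\,g(!x).
\]
For $\psi\circ\phi\equiv\mathrm{id}$, compute $\phi(f)(!a)$ using $\multimap$-C and then $!$-C to obtain $f(!a)$, and finish with $\Pi$-U. For $\phi\circ\psi\equiv\mathrm{id}$, first use $\Pi$-C to reduce $\psi(g)(!x)$ to $g(!x)$ inside the let; the remaining identity $g(y)\equiv \lbi{y}{!x}{g(!x)}$ in context $y:!A$ is obtained by rewriting $g(y)\equiv g(\lbi{y}{!x}{!x})$ via $!$-U and then pulling the $!$-let out through the linear context $C[-] := g(-)$ by the commuting conversion; conclude with $\multimap$-U.

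For part (2), I would take
\[
\phi(t) := \lbi{t}{!x\otimes b}{!x\otimes b}\qquad\text{and}\qquad \psi(c) := \lbi{c}{y\otimes b}{\lbi{y}{!x}{!x\otimes b}},
\]
where in $\phi(t)$ the body is the $\otimes$-introduction and in $\psi(c)$ the inner $!x\otimes b$ is the $\Sigma$-introduction. For $\psi\circ\phi\equiv\mathrm{id}$, use a commuting conversion to pull the outer $\Sigma$-let of $\phi(t)$ out of $\psi$, apply $\otimes$-C and then $!$-C on the nested destructors in the body, and finish with $\Sigma$-U. For $\phi\circ\psi\equiv\mathrm{id}$, dually pull the $\otimes$-let through $\phi$ and then the $!$-let through the linear context $[-]\otimes b$ by commuting conversions, apply $\Sigma$-C to the innermost $!x\otimes b$, and finally rewrite $y\otimes b\equiv \lbi{y}{!x}{!x\otimes b}$ (using $!$-U and the commuting conversion with $C[-] := -\otimes b$) together with $\otimes$-U.

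The main obstacle is not any single step but the bookkeeping in the $\phi\circ\psi$ directions, where $!$-U must be used \emph{reversely}, i.e., to introduce a redundant $\lbi{y}{!x}{!x}$ which can then be commuted through a surrounding linear context to align with the $\beta$-reductions. Once this manoeuvre is isolated, all four equations reduce to mechanical applications of the rules from the figures above.
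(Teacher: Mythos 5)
Your proposal is correct: the explicit terms $\lambda_{!x:!A}g(!x)$ and $\lambda_{y:!A}\lbi{y}{!x}{f(!x)}$ (resp.\ the nested $\mathrm{let}$-destructors for $\Sigma$ versus $\otimes$) type-check, and the four equations do follow from the -C and -U rules once the $!$-U expansion is pushed through a linear context by the commuting conversions, which is exactly the standard DILL-style argument the paper relies on (the paper itself defers the proof to its technical report). No gaps; this is essentially the intended proof.
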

In particular, we have the following stronger version of a special case.
\begin{theorem}[$!$ as $\Sigma I$]\label{thm:!fromsigma}
Suppose we have $\Sigma$- and $I$-types. Let $\Delta;\cdot \vdash A\;\mathrm{type}$. Then, $\Sigma_{!x:{!A}}I$ satisfies the rules for $!A$. Conversely, if we have $!$- and $I$-types, then $!A$ satisfies the rules for $\Sigma_{!x:{!A}}I$. 
\end{theorem}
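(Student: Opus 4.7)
The plan is to verify each direction by giving explicit definitions of the ``foreign'' term formers in terms of the available ones and checking the computation and uniqueness equations; the commuting conversions of the last figure do the substantive work, while the formation, introduction, and elimination rules are immediate from the definitions.

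For the direction asserting that $\Sigma_{!x:!A}I$ satisfies the rules of $!$, assume $\Sigma$- and $I$-types. Set $!A := \Sigma_{!x:!A}I$ (well-formed since $x$ is not free in $I$), take the introduction $!a := !a\otimes *$, and encode the eliminator by $\lbi{t}{!x}{b} := \lbi{t}{!x\otimes y}{\lbi{y}{*}{b}}$ for a fresh linear $y:I$. The $!$-C equation then unfolds by $\Sigma$-C followed by $I$-C to $b[a/x]$. For $!$-U, unfold $\lbi{t}{!x}{!x}$ to $\lbi{t}{!x\otimes y}{\lbi{y}{*}{!x\otimes *}}$, commute $\lbi{y}{*}{-}$ past the outer pair constructor (a commuting conversion) to get $\lbi{t}{!x\otimes y}{!x\otimes \lbi{y}{*}{*}}$, apply $I$-U to replace $\lbi{y}{*}{*}$ by $y$, and finish with $\Sigma$-U.

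For the converse, with $!$- and $I$-types, set $\Sigma_{!x:!A}I := !A$, take $!a\otimes b := \lbi{b}{*}{!a}$, and define $\lbi{t}{!x\otimes y}{c} := \lbi{t}{!x}{c[*/y]}$. The $\Sigma$-C equation requires showing $\lbi{\lbi{b}{*}{!a}}{!x}{c[*/y]}\equiv c[a/x,b/y]$: a commuting conversion pulls $\lbi{b}{*}{-}$ outwards, $!$-C reduces the body to $c[a/x,*/y]$, and the resulting $\lbi{b}{*}{c[a/x,*/y]}$ is reconciled with $c[a/x,b/y]$ via the auxiliary identity $c'[b/y]\equiv \lbi{b}{*}{c'[*/y]}$ for linear $y:I$ and any $c'$ with $y$ free. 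This identity is proved by writing $c' = C'[y]$ with $C'[-]$ a context built without $!$ (legitimate since $y$ is linear and thus never occurs under a $!$) and computing $\lbi{b}{*}{C'[*]}\equiv C'[\lbi{b}{*}{*}]\equiv C'[b]$ by commuting conversion and $I$-U. The $\Sigma$-U equation unfolds to $\lbi{t}{!x}{\lbi{*}{*}{!x}}$, which reduces to $t$ by $I$-C followed by $!$-U.

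The main obstacle is the $\Sigma$-C verification in this second direction, specifically reconciling the substitution $c[*/y]$ with the required $c[a/x,b/y]$; the auxiliary identity above, which is the most substantive application of the commuting conversions in the argument, is precisely what bridges this gap.
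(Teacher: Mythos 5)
Your overall strategy --- encoding each set of term formers in terms of the other and discharging the -C and -U equations via the commuting conversions, with the only substantive work sitting in the $\Sigma$-C equation of the converse direction --- is the natural route and matches the intended proof. The first direction, the $!$-C computation, and the $\Sigma$-U computation are all fine as you give them. The weak point is the justification of your auxiliary identity $c'[b/y]\equiv\lbi{b}{*}{c'[*/y]}$. You argue that the linear variable $y$ occurs exactly once in $c'$ and never under a $!$, so that $c'=C'[y]$ for a single-hole linear program context $C'[-]$, and then apply one commuting conversion plus $I$-U. But ``occurs exactly once'' is not literally true in this calculus: the additive rules duplicate or discard the linear context. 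In $\&$-I the same $\Xi$ feeds both components, so $y$ may occur once in each component of $\langle c_1,c_2\rangle$; in $\oplus$-E it occurs in both branches; and in $\top$-I and $0$-E it may not occur at all (e.g.\ $\Delta;y:I\vdash\langle\rangle:\top$). In these cases there is no single-hole context $C'[-]$ with $c'=C'[y]$, so the one-step commuting conversion does not apply. Since the $\Sigma$-E rule you must validate quantifies over arbitrary bodies $c$ of the ambient theory, these cases cannot be excluded.

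The identity itself is still derivable, but it requires a structural induction on $c'$: in the multiplicative and exponential cases the occurrence of $y$ sits in a linear position and your commuting-conversion argument applies verbatim, while the additive cases are handled by their uniqueness rules together with further commuting conversions --- for instance $\lbi{b}{*}{\langle d_1,d_2\rangle}\equiv\langle \mathrm{fst}(\lbi{b}{*}{\langle d_1,d_2\rangle}),\mathrm{snd}(\lbi{b}{*}{\langle d_1,d_2\rangle})\rangle\equiv\langle\lbi{b}{*}{d_1},\lbi{b}{*}{d_2}\rangle$ by $\&$-U, the commuting conversion for $\mathrm{fst}(-)$ and $\mathrm{snd}(-)$, and $\&$-C; and $\lbi{b}{*}{\langle\rangle}\equiv\langle\rangle$ by $\top$-U, $\lbi{b}{*}{\mathrm{false}(t)}\equiv\mathrm{false}(t)$ by the second commuting conversion. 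Once this lemma --- really the statement that substitution of a term of type $I$ agrees with $I$-elimination --- is established in full generality, the rest of your verification of $\Sigma$-C goes through exactly as written.
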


A second interpretation is that $\Pi$ and $\Sigma$ generalise $\&$ and $\oplus$. Indeed, the idea is that these (or their infinitary equivalents) are the forms they take when indexed over discrete types. The subtlety of this result lies in the definition of a discrete type. The same phenomenon is observed in a different context in Section \ref{sec:dismod}.

For our purposes, a discrete type is a strong sum of $\top$ (a sum with a dependent -E-rule). For simplicity, let us limit ourselves to the binary case. In this setting, the discrete type with two elements will be $2=\top\oplus \top$, where $\oplus$ has a strong/dependent -E-rule (note that this is not our $\oplus$-E). Explicitly, $2$ is a type with the following -F-, -I-, and -E-rules (and the obvious -C- and -U-rules):
\vspace{-15pt}
\begin{figure}
\fbox{\parbox{\textwidth}{\scriptsize
\begin{tabular}{lcr}
\AxiomC{}
\RightLabel{}
\UnaryInfC{$\Delta;\cdot\vdash 2\;\mathrm{type}$}
\DisplayProof
\hspace{90pt}
&
\AxiomC{}
\RightLabel{}
\UnaryInfC{$\Delta;\cdot\vdash \mathrm{tt}:2$}
\DisplayProof
\hspace{90pt}
&
\AxiomC{}
\RightLabel{}
\UnaryInfC{$\Delta;\cdot\vdash \mathrm{ff}:2$}
\DisplayProof
\end{tabular}
\\
\\
\\
\begin{tabular}{c}
\AxiomC{$\Delta,x:2;\cdot \vdash A\;\mathrm{type}$}
\AxiomC{$\Delta;\cdot\vdash t:2$}
\AxiomC{$\Delta;\Xi\vdash a_{\mathrm{tt}}:A[{\mathrm{tt}}/x]$}
\AxiomC{$\Delta;\Xi\vdash a_{\mathrm{ff}}:A[{\mathrm{ff}}/x]$}
\RightLabel{}
\QuaternaryInfC{$\Delta;\Xi\vdash \mathrm{if}\; t\;\mathrm{then}\; a_{\mathrm{tt}}\;\mathrm{else}\;a_{\mathrm{ff}}:A[{t}/x]$}
\DisplayProof
\normalsize
\end{tabular}
}}
\caption{Rules for a discrete type $2$, with -C- and -U-rules omitted for reasons of space.}
\end{figure}
\vspace{-20pt}
\begin{theorem}[$\Pi$ and $\Sigma$ as Infinitary Non-Discrete $\&$ and $\oplus$] If we have a discrete type $2$ and a type family $\Delta,x: 2;\cdot\vdash A\;\mathrm{type}$, then
\begin{enumerate}
\item $\Pi_{!x:{!2}}A$ satisfies the rules for $A[{\mathrm{tt}}/x]\& A[{\mathrm{ff}}/x]$;
\item $\Sigma_{!x:{!2}}A$ satisfies the rules for $A[{\mathrm{tt}}/x]\oplus A[{\mathrm{ff}}/x]$.
\end{enumerate}
\end{theorem}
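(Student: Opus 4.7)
The plan is to give explicit encodings of the $\&$- and $\oplus$-constructors using $\Pi_{!x:!2}$ and $\Sigma_{!x:!2}$ together with the if-then-else eliminator for $2$, and then derive the -C and -U equations for $\&$ and $\oplus$ by reducing them to the $\beta/\eta$-laws of $\Pi$, $\Sigma$, $\multimap$ and $2$.

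For (1) I would set $\langle a,b\rangle := \lambda_{!x:!2}(\mathrm{if}\;x\;\mathrm{then}\;a\;\mathrm{else}\;b)$, $\mathrm{fst}(f):=f(!\mathrm{tt})$ and $\mathrm{snd}(f):=f(!\mathrm{ff})$, using the dependent $2$-E rule (with type family $A$) to type the body of the $\lambda$ in context $\Delta,x:2;\Xi$. The two -C equations then follow by a $\Pi$-C step followed by a $2$-C step. For -U, a $\Pi$-U step reduces the goal to $\mathrm{if}\;x\;\mathrm{then}\;f(!\mathrm{tt})\;\mathrm{else}\;f(!\mathrm{ff}) \equiv f(!x)$ in context $\Delta,x:2;\Xi$, which is the dependent $\eta$-law $2$-U applied to $f(!x)$. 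For (2) the injections are the evident ones, $\mathrm{inl}(a):=!\mathrm{tt}\otimes a$ and $\mathrm{inr}(b):=!\mathrm{ff}\otimes b$. The case-analysis requires more care, because after applying $\Sigma$-E one lands in context $\Delta,x:2;\Xi,y:A$ where $y$ still carries the $x$-dependent type $A$, whereas the given branches $c,d$ expect $y$ at the specialised types $A[\mathrm{tt}/x]$ and $A[\mathrm{ff}/x]$. I resolve this by pulling $y$ out of the if-then-else, wrapping each branch in a $\lambda$-abstraction and applying $2$-E at the $x$-dependent linear function type $A\multimap C$:
\[
\mathrm{case}\;t\;\mathrm{of}\;\mathrm{inl}(x')\rightarrow c\;||\;\mathrm{inr}(y')\rightarrow d \;:=\; \mathrm{let}\;t\;\mathrm{be}\;!x\otimes y\;\mathrm{in}\;(\mathrm{if}\;x\;\mathrm{then}\;\lambda_{x':A[\mathrm{tt}/x]}c\;\mathrm{else}\;\lambda_{y':A[\mathrm{ff}/x]}d)(y).
\]
The two -C equations then reduce via $\Sigma$-C, $2$-C and $\multimap$-C to $c[a/x']$ and $d[b/y']$, and the -U equation follows from $\Sigma$-U once one observes, using $2$-U applied to $\lambda_{y:A}(!x\otimes y)$ followed by a single $\multimap$-$\beta$-step, that the body of the let is judgementally equal to $!x\otimes y$.

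The main subtlety I anticipate is precisely the $\multimap$-wrapping in the $\oplus$ case, forced by the contextual asymmetry of $\Sigma$-E: without it, one cannot simultaneously mention $y$ in the if-then-else branches and keep each branch well-typed at $C$. Everything else is a mechanical combination of the $\beta/\eta$-laws of the individual constructors, which are all in force thanks to the paper's blanket convention that all -U rules — including the dependent $\eta$-rule for the strong elimination on $2$ — are present.
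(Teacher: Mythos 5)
Your proof is correct in its essentials, and since the paper states this theorem without an accompanying proof (all proofs are deferred to the technical report \cite{vakar2014syntax}), there is no in-paper argument to compare against; your encodings $\langle a,b\rangle := \lambda_{!x:!2}(\mathrm{if}\;x\;\mathrm{then}\;a\;\mathrm{else}\;b)$, $\mathrm{fst}(f):=f(!\mathrm{tt})$, $\mathrm{inl}(a):=\,!\mathrm{tt}\otimes a$, etc., are the natural ones. Two points that you gesture at deserve to be stated as explicit caveats. First, your $\oplus$-eliminator genuinely requires $\multimap$-types, which are not among the stated hypotheses of the theorem: you are right that the dependent $2$-E rule refines only the motive and not the types of variables in the linear context, so the variable $y:A$ released by $\Sigma$-E cannot be consumed directly in the branches, and the $\lambda_{x':A[\mathrm{tt}/x]}(-)$ wrapping (or some equivalent closed structure) is unavoidable; part (2) of the theorem as you prove it should therefore carry a ``$\multimap$-types'' hypothesis. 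Second, both -U verifications hinge on reading the omitted $2$-U rule in its strong dependent form, $\mathrm{if}\;x\;\mathrm{then}\;c[\mathrm{tt}/x]\;\mathrm{else}\;c[\mathrm{ff}/x]\equiv c$ for arbitrary $\Delta,x:2;\Xi\vdash c:A$. The weaker ``rebuild'' form $\mathrm{if}\;t\;\mathrm{then}\;\mathrm{tt}\;\mathrm{else}\;\mathrm{ff}\equiv t$, which is the shape of the stated $\oplus$-U and $\otimes$-U rules, would not suffice: the contexts $f(!(-))$ and $(\lambda_{y:A}(!x\otimes y))(-)$ one would need to push through are not linear program contexts, so the paper's commuting conversions cannot upgrade the weak form to the strong one. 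Your reading matches the pattern of the stated $\mathrm{Id}$-U rule and is surely the intended one, but it is the load-bearing assumption of the whole argument and should be flagged as such rather than attributed to a ``blanket convention''. Modulo these two clarifications, the derivations of the -F, -I, -E, -C, and -U rules go through exactly as you describe.
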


\section{Categorical Semantics}
\vspace{-7pt}
\label{sec:sem}
We now introduce a notion of categorical model for which soundness and completeness results hold with respect to the syntax of ILDTT in the presence of $I$- and $\otimes$-types\footnote{If we are interested in the case without $I$- and $\otimes$-types, the semantics easily generalises to strict indexed symmetric multicategories with comprehension.}. This notion of model will prove to be particularly useful when thinking about various (extensional) type formers.

\begin{definition}By a \emph{strict indexed symmetric monoidal category with comprehension}, we will mean the following data.
\begin{enumerate}
\item A category $\mathcal{C}$ with a terminal object $\cdot$.
\item A strict indexed symmetric monoidal category $\mathcal{L}$ over $\mathcal{C}$, i.e. a contravariant functor $\mathcal{L}$ into the category $\mathrm{SMCat}$ of (small) symmetric monoidal categories and strong monoidal functors $\mathcal{C}^{op}\ra{\mathcal{L}}\mathrm{SMCat}.$
We will also write $-\{f\}:=\mathcal{L}(f)$ for the action of $\mathcal{L}$ on a morphism $f$ of $\mathcal{C}$.
\item A \emph{comprehension schema}: for each $\Delta\in\mathrm{ob}(\mathcal{C})$ and $A\in\mathrm{ob}(\mathcal{L}(\Delta))$, a representation of the functor\vspace{-2pt} $$x\mapsto\mathcal{L}(\mathrm{dom}(x))(I,A\{x\}):(\mathcal{C}/\Delta)^{op}\ra{}\mathrm{Set}.$$ \vspace{-2pt} We will write its representing object\footnote{Really, $\Delta.MA\ra{\mathbf{p}_{\Delta,MA}}\Delta$ would be a better notation, where we think of $L\dashv M$ as an adjunction inducing $!$, but it would be very verbose.} $\Delta.{A}\ra{\mathbf{p}_{\Delta,{A}}}\Delta\in\mathrm{ob}(\mathcal{C}/\Delta)$ and universal element $\mathbf{v}_{\Delta,{A}}\in\mathcal{L}(\Delta.{A})(I,A\{\mathbf{p}_{\Delta,{A}}\})$. We will write $a  \mapsto  \langle f,a\rangle$ for the isomorphism $\mathcal{L}(\Delta')(I,A\{f\}) \cong  \mathcal{C}/\Delta(f,\mathbf{p}_{\Delta,{A}})$, if $\Delta'\ra{f}\Delta$.
\end{enumerate}
\end{definition}
\begin{remark}Note that this notion of model reduces to a standard notion of model for DTT when the monoidal structures on the fibre categories are Cartesian: a reformulation of split comprehension categories with $1$- and $\times$-types. To get a precise fit with the syntax, one usually imposes the extra condition called ``fullness'' on these \cite{jacobs1993comprehension}. The fact that we leave out this last condition precisely allows for non-trivial $!$-types (i.e. ones such that $!A\ncong A$) in our models of ILDTT. Every model of DTT is, in particular, a (degenerate) model of ILDTT. We will see that the type formers of ILDTT also generalise those of DTT.\end{remark}
\begin{theorem}[Soundness] We can soundly interpret ILDTT with $I$- and $\otimes$-types in a strict indexed symmetric monoidal category $(\mathcal{C},\mathcal{L})$ with comprehension.
\end{theorem}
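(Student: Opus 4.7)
The plan is to define the interpretation $\llbracket - \rrbracket$ simultaneously on well-formed contexts, types, terms, and their judgemental equalities by induction on derivations, and then show each rule of inference is validated. I would interpret the empty context as the terminal object of $\mathcal{C}$; the extended context $\Delta,x{:}A;\cdot$ as $\llbracket\Delta\rrbracket.\llbracket A\rrbracket$ obtained from the comprehension; and a mixed context $\Delta;x_1{:}A_1,\ldots,x_n{:}A_n$ by first interpreting $\Delta$ as an object of $\mathcal{C}$ and then taking the (left-bracketed) tensor product $\llbracket A_1\rrbracket\otimes\cdots\otimes\llbracket A_n\rrbracket$ (or $I$ when $n=0$) inside the fibre $\mathcal{L}(\llbracket\Delta\rrbracket)$. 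A type judgement $\Delta;\cdot\vdash A\;\mathrm{type}$ becomes an object of $\mathcal{L}(\llbracket\Delta\rrbracket)$, and a term judgement $\Delta;\Xi\vdash a:A$ becomes a morphism $\llbracket\Xi\rrbracket\to\llbracket A\rrbracket$ in $\mathcal{L}(\llbracket\Delta\rrbracket)$. Variables are interpreted using the universal element $\mathbf{v}_{\Delta,A}$ for intuitionistic variables, and as (components of) identity morphisms for linear variables.

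Next I would prove the key substitution lemma, in two pieces. Intuitionistic substitution of $\Delta;\cdot\vdash a:A$ into $\Delta,x{:}A,\Delta';\Xi\vdash \mathcal{J}$ is interpreted by reindexing along the morphism $\llbracket\Delta,\Delta'[a/x]\rrbracket\to\llbracket\Delta,x{:}A,\Delta'\rrbracket$ in $\mathcal{C}$ obtained from $\langle\mathrm{id},\llbracket a\rrbracket\rangle$ together with the comprehension isomorphism and strict functoriality of $\mathcal{L}$. Linear substitution of $\Delta;\Xi'\vdash a:A$ into $\Delta;\Xi,x{:}A\vdash b:B$ is interpreted by tensoring $\llbracket a\rrbracket$ with the identity on $\llbracket\Xi\rrbracket$ and composing in $\mathcal{L}(\llbracket\Delta\rrbracket)$, modulo the coherence isomorphisms of the symmetric monoidal structure. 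Exchange and weakening follow, respectively, from the symmetry of $\otimes$ in each fibre, and from reindexing along the projection $\llbracket\Delta,x{:}A,\Delta'\rrbracket\to\llbracket\Delta,\Delta'\rrbracket$ (which uses Beck--Chevalley/strictness of the indexing to stay compatible with type formers).

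Having set this up, verifying the structural rules, the equivalence and conversion rules, and the rules for $I$ and $\otimes$ is straightforward. The unit $I$ of the fibre interprets $I$-F; its unitor gives $I$-I and $I$-E; and the triangle/unit coherence axioms give $I$-C and $I$-U. Similarly, the fibrewise tensor interprets $\otimes$-F; the morphism $\llbracket a\rrbracket\otimes\llbracket b\rrbracket$ interprets $\otimes$-I; the associator lets us interpret $\mathrm{let}\;t\;\mathrm{be}\;x\otimes y\;\mathrm{in}\;c$ by precomposing $c$ with $\mathrm{id}\otimes t\otimes\mathrm{id}$ on $\llbracket\Xi\rrbracket\otimes\llbracket\Xi'\rrbracket$; and the $\otimes$-C and $\otimes$-U identities reduce to symmetric monoidal coherence. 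The commuting conversions for linear program contexts $C[-]$ hold because every such $C[-]$ is interpreted as a morphism in a fibre, and post/precomposition in a symmetric monoidal category is bilinear in the required sense.

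The main obstacle is coherence of the interpretation: a raw derivation carries more information (bracketing, order of variables, choice of where a weakening is inserted) than the interpretation should depend on, and one must check that judgementally equal derivations — and in particular different parsings of the same context — yield equal morphisms. I would handle this by defining $\llbracket\Xi\rrbracket$ via a fixed canonical bracketing and exchanging the order of variables explicitly with the symmetry, then verifying that the structural rules act on interpretations via uniquely determined coherence isomorphisms (Mac Lane's coherence theorem in each fibre). Once this bookkeeping is in place, soundness for the judgemental equality rules reduces to checking each $\beta$/$\eta$ rule against the corresponding fibrewise equation, and soundness for the substitution lemma closes the induction.
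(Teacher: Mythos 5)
Your proposal follows essentially the same route as the paper's own (very brief) proof sketch: contexts are interpreted as a pair of an object of $\mathcal{C}$ and an object of the fibre over it, types as fibre objects, terms as fibre morphisms out of the tensor of the linear context, with Int-C-Ext, Int-Var, Int-Weak and intuitionistic substitution all handled through the comprehension and reindexing, and the propositional linear fragment handled fibrewise by the symmetric monoidal structure. Your additional attention to linear substitution, exchange, the commuting conversions, and the coherence/bracketing bookkeeping is exactly the ``trivial verification'' the paper elides, so the proposal is correct and consistent with the paper's argument.
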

\begin{proof}[sketch]
The idea is that a context $\Delta;\Xi$ will be (inductively) interpreted by a pair of objects $\sem{\Delta}\in\mathrm{ob}(\mathcal{C})$, $\sem{\Xi}\in\mathrm{ob}(\mathcal{L}(\sem{\Delta}))$, a type $A$ in context $\Delta;\cdot$ by an object $\sem{A}$ of $\mathcal{L}(\sem{\Delta})$, and a term $a:A$ in context $\Delta;\Xi$ by a morphism $\sem{\Xi}\ra{\sem{a}}\sem{A}\in\mathcal{L}(\sem{\Delta})$. Generally, the interpretation of the propositional linear type theory in intuitionistic context $\Delta;\cdot$ takes place in $\mathcal{L}(\sem{\Delta})$, as would be expected.

The crux is that Int-C-Ext ($\sem{\Delta,x:A}:=\mathrm{dom}(\mathbf{p}_{\sem{\Delta},\sem{A}})$), Int-Var ($\sem{\Delta,x:A;\cdot\vdash x:A}:=\mathbf{v}_{\sem{\Delta},\sem{A}}$), and Int-Subst (by $\mathcal{L}(\langle \mathrm{id}_{\sem{\Delta}},\sem{a}\rangle)$) are interpreted through the comprehension, as is Int-Weak (through $\mathcal{L}$ of the obvious morphism in $\mathcal{C}$).

The remaining soundness checks are straightforward.
\end{proof}
\begin{theorem}[Completeness] In fact, this interpretation is complete.\end{theorem}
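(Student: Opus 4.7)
The plan is the standard one for syntax-semantics completeness: construct a syntactic (term) model $(\mathcal{C}_{\mathrm{syn}},\mathcal{L}_{\mathrm{syn}})$ of the kind described in the definition, observe that the interpretation functor from the syntax into this model is essentially the identity on $\equiv$-equivalence classes, and conclude that any equation validated by every model is in particular validated in the syntactic model, hence already judgementally derivable.

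First I would define $\mathcal{C}_{\mathrm{syn}}$ to have as objects the intuitionistic contexts $\Delta$ modulo $\equiv$ and renaming of bound variables, with morphisms $[\Delta']\to[\Delta]$ given by equivalence classes of simultaneous intuitionistic substitutions: tuples $(t_1,\ldots,t_n)$ with $\Delta';\cdot\vdash t_i:A_i[t_1/x_1,\ldots,t_{i-1}/x_{i-1}]$ when $\Delta=x_1{:}A_1,\ldots,x_n{:}A_n$. Composition is substitution, the identity is the tuple of variables, and the terminal object is the empty context. For each $[\Delta]$, the fibre $\mathcal{L}_{\mathrm{syn}}([\Delta])$ is the symmetric monoidal category whose objects are $\equiv$-classes of types $A$ with $\Delta;\cdot\vdash A\;\mathrm{type}$ and whose morphisms $A\to B$ are $\equiv$-classes of terms $\Delta;x{:}A\vdash b:B$, composed by linear substitution. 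The monoidal structure is $(I,\otimes)$; associators, unitors and the symmetry are the canonical isomorphisms induced by the rules for $I$ and $\otimes$, whose coherence follows from the $\multimap$-U-style uniqueness principles for $I$- and $\otimes$-types. Intuitionistic substitution induces strong monoidal functors $\mathcal{L}_{\mathrm{syn}}([\sigma])$ acting as $[A]\mapsto[A[\sigma]]$, strictly contravariantly once one quotients by $\equiv$.

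Next I would exhibit the comprehension. Set $[\Delta].[A]:=[\Delta,x{:}A]$, take $\mathbf{p}_{\Delta,A}$ to be the weakening substitution onto the first $n$ coordinates, and take $\mathbf{v}_{\Delta,A}$ to be the intuitionistic variable term $\Delta,x{:}A;\cdot\vdash x:A$, which lies in $\mathcal{L}_{\mathrm{syn}}([\Delta,x{:}A])(I,A\{\mathbf{p}_{\Delta,A}\})$ since its linear region is empty. The required natural bijection $\mathcal{L}(\mathrm{dom}(\sigma))(I,A\{\sigma\})\cong\mathcal{C}/[\Delta]([\sigma],\mathbf{p}_{\Delta,A})$ is then literally the operation $[t]\mapsto[\langle\sigma,t\rangle]$ of extending a substitution by a term, whose bijectivity up to $\equiv$ is exactly the content of the Int-C-Ext, Int-Var and Int-Tm-Subst rules. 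This shows $(\mathcal{C}_{\mathrm{syn}},\mathcal{L}_{\mathrm{syn}})$ is a strict indexed symmetric monoidal category with comprehension.

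Finally, I would verify by induction on the derivations of context-, type- and term-formation that the soundness interpretation in this particular model satisfies $\sem{\Delta}=[\Delta]$, $\sem{A}=[A]$ and $\sem{a}=[a]$. Every clause of the semantic definition is matched syntactically by a constructor governed by an introduction rule together with the corresponding -C- and -U-rules, so the two sides agree in each fibre modulo $\equiv$. Completeness is then immediate: if $\sem{a}=\sem{a'}$ holds in every model, then in $(\mathcal{C}_{\mathrm{syn}},\mathcal{L}_{\mathrm{syn}})$ it yields $[a]=[a']$, which is the assertion $\Delta;\Xi\vdash a\equiv a':A$; the analogous argument handles type and context equalities.

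The main obstacle is strictness and coherence. The definition demands a \emph{strict} $\mathrm{Set}$-indexed functor $\mathcal{C}^{op}\to\mathrm{SMCat}$, whereas substitution in the raw syntax is associative only up to canonical isomorphism, and the type-former-preservation isomorphisms of Remark~1 (as well as the Beck-Chevalley witnesses for $\Sigma$, $\Pi$ and $\mathrm{Id}$) hold on the nose only after quotienting. Passing to $\equiv$-classes absorbs most of this, but one must check that the quotient is well-defined on morphisms, that the comprehension representation is strictly natural in the base, and that the strong-monoidal substitution functors satisfy $\mathcal{L}(\tau)\circ\mathcal{L}(\sigma)=\mathcal{L}(\sigma\circ\tau)$ and not merely up to coherent iso. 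These are routine but tedious verifications of the splitting/strictification style familiar from the DTT completeness proof of Hofmann, adapted to accommodate the linear fibre structure and the $!$-induced comprehension.
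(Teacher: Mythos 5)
Your proposal follows exactly the route the paper takes: its proof is the one-line sketch ``we see this through the construction of a syntactic category,'' and your term model $(\mathcal{C}_{\mathrm{syn}},\mathcal{L}_{\mathrm{syn}})$ with contexts as objects, substitution tuples as morphisms, $\equiv$-classes of types and terms in the fibres, and comprehension given by context extension is the standard realisation of that sketch. The construction and the concluding argument (the interpretation in the syntactic model is the identity on $\equiv$-classes, so semantic equality implies judgemental equality) are correct, and you rightly flag strictness of the indexing as the only delicate point.
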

\begin{proof}[sketch] This follows from the construction of a syntactic category.\end{proof}
In fact, we would like to say that the syntax is even an internal language for such categories. This is almost true, and can be made entirely true either by imposing on our notion of model a restriction that excludes any non-trivial morphisms into objects that are not of the form $\Delta.{A}$, or by extending the syntax to talk about context morphisms explicitly \cite{pitts2001categorical}. Following the DTT tradition, we have opted against the latter.\\
\\
We next characterise the categorical description of the various type formers. First, we note the following.
\begin{theorem}[Comprehension Functor]\label{thm:comprfunc} A comprehension schema $(\mathbf{p},\mathbf{v})$ on a strict indexed symmetric monoidal category $(\mathcal{C},\mathcal{L})$ defines a morphism $\mathcal{L}\ra{M}\mathcal{I}$ of indexed categories, where $\mathcal{I}$ is the full sub-indexed category of $\mathcal{C}/-$ (after making a choice of pullbacks) on the objects of the form $\mathbf{p}_{\Delta,{A}}$ and where\\
\[
\begin{tikzcd}[column sep=large]
M_\Delta(A\ra{a}B):=\mathbf{p}_{\Delta,{A}} \arrow[r,"{\langle\mathbf{p}_{\Delta,{A}},a\{\mathbf{p}_{\Delta,{A}}\}\circ \mathbf{v}_{\Delta,{A}}\rangle}"] & \mathbf{p}_{\Delta,{B}}.
\end{tikzcd}
\]
\end{theorem}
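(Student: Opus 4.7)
The plan is to verify the two pieces of data of a morphism of indexed categories in turn: functoriality of each fibre component $M_\Delta:\mathcal{L}(\Delta)\to\mathcal{I}(\Delta)$, and naturality (strict commutation of the reindexing squares) in $\Delta$. The single tool driving the whole argument is the representing-object bijection $a\mapsto\langle f,a\rangle$ built into the comprehension schema, together with its Yoneda-style naturality in the parameter $f$.

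First I would check that $M_\Delta$ is a functor. On objects it is literally the assignment $A\mapsto\mathbf{p}_{\Delta,A}$, which lands in the full subcategory $\mathcal{I}(\Delta)$ by definition. For a morphism $a:A\to B$ in $\mathcal{L}(\Delta)$, the element $a\{\mathbf{p}_{\Delta,A}\}\circ\mathbf{v}_{\Delta,A}$ lies in $\mathcal{L}(\Delta.A)(I,B\{\mathbf{p}_{\Delta,A}\})$, and the comprehension bijection converts it into a morphism $\mathbf{p}_{\Delta,A}\to\mathbf{p}_{\Delta,B}$ in $\mathcal{C}/\Delta$, so $M_\Delta$ is at least well-defined on arrows. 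Preservation of identities reduces to $\langle\mathbf{p}_{\Delta,A},\mathbf{v}_{\Delta,A}\rangle=\mathrm{id}_{\Delta.A}$, which is exactly the statement that $\mathbf{v}_{\Delta,A}$ is the \emph{universal} element. Preservation of composition, given $A\xrightarrow{a}B\xrightarrow{b}C$, follows by showing that both $M_\Delta(b)\circ M_\Delta(a)$ and $M_\Delta(b\circ a)$ correspond, under the bijection $\langle\mathbf{p}_{\Delta,A},-\rangle$, to the common element $b\{\mathbf{p}_{\Delta,A}\}\circ a\{\mathbf{p}_{\Delta,A}\}\circ\mathbf{v}_{\Delta,A}$; for $M_\Delta(b)\circ M_\Delta(a)$ this uses naturality of the representation bijection in the slice variable, applied precisely to the morphism $M_\Delta(a)$.

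Next I would verify naturality in $\Delta$. Fix $f:\Delta'\to\Delta$ in $\mathcal{C}$ and consider the square with $\mathcal{L}(f)=-\{f\}$ on the left and the chosen pullback functor $f^{\ast}$ on the right. The crucial observation is that $\mathbf{p}_{\Delta',A\{f\}}$ represents the same functor $(\mathcal{C}/\Delta')^{op}\to\mathrm{Set}$ as $f^{\ast}\mathbf{p}_{\Delta,A}$ does, namely $x\mapsto\mathcal{L}(\mathrm{dom}(x))(I,A\{f\circ x\})$. Hence we may choose the pullbacks defining $\mathcal{I}$ so that $f^{\ast}\mathbf{p}_{\Delta,A}=\mathbf{p}_{\Delta',A\{f\}}$, with the mediating arrow $\Delta'.A\{f\}\to\Delta.A$ being $\langle f\circ\mathbf{p}_{\Delta',A\{f\}},\mathbf{v}_{\Delta',A\{f\}}\rangle$. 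With this choice the square commutes strictly on objects, and commutation on morphisms reduces yet again to an application of $\langle-,-\rangle$: one checks that $a\{f\}\{\mathbf{p}_{\Delta',A\{f\}}\}\circ\mathbf{v}_{\Delta',A\{f\}}$ is the pullback-transport of $a\{\mathbf{p}_{\Delta,A}\}\circ\mathbf{v}_{\Delta,A}$ along $f$, using strict functoriality of $\mathcal{L}$ in the reindexing.

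The main obstacle is the strictness bookkeeping: arranging the choice of pullbacks in $\mathcal{I}$ so that it literally coincides with $\mathbf{p}_{\Delta',A\{f\}}$, and tracking that the chosen universal elements $\mathbf{v}$ really do transport under $-\{f\}$ on the nose rather than merely up to canonical iso. Once that choice is pinned down, both the functoriality and naturality axioms collapse into instances of the representability bijection, and the remaining work is a routine but slightly intricate diagram chase verifying that the various $\langle-,-\rangle$ are natural in the argument being applied.
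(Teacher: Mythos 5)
Your proposal is correct and follows exactly the argument the paper intends (the paper itself omits the proof, deferring to the technical report): functoriality of each $M_\Delta$ via the representability bijection and universality of $\mathbf{v}_{\Delta,A}$, and strict naturality in $\Delta$ by choosing the pullbacks in $\mathcal{I}$ to be $\mathbf{p}_{\Delta',A\{f\}}$ with mediating map $\mathbf{q}_{f,A}=\langle f\circ\mathbf{p}_{\Delta',A\{f\}},\mathbf{v}_{\Delta',A\{f\}}\rangle$, which is precisely the convention the paper adopts in Theorem~\ref{thm:semtype}. No gaps.
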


Note that $\mathcal{I}$ is a display map category and hence a model of DTT \cite{jacobs1993comprehension}. We will think of it as the intuitionistic content of $\mathcal{L}$. We will see that the comprehension functor gives us a unique candidate for $!$-types: $!:=LM$, where $L\dashv M$ is a monoidal adjunction. We conclude that, in ILDTT, the $!$-modality is uniquely determined by the indexing. This is worth noting, because, in propositional linear type theory, we might have many different candidates for $!$-types.  

\begin{theorem}[Semantic Type Formers]\label{thm:semtype} For the other type formers, a model $(\mathcal{C},\mathcal{L},\mathbf{p},\mathbf{v})$ of ILDTT with $I$- and $\otimes$-types has the following properties.
\begin{enumerate}
\item It supports $\Sigma$-types iff all the pullback functors $\mathcal{L}(\mathbf{p}_{\Delta,{A}})$ have left adjoints $\Sigma_{!{A}}$ that satisfy the Beck-Chevalley condition in the sense that the canonical map $\Sigma_{!{A\{f\}}}\circ \mathcal{L}(\mathbf{q}_{f,{A}})\ra{}\mathcal{L}(f)\circ\Sigma_{!{A}}$ is an iso, where $\Delta'\ra{f}\Delta$ and $\mathbf{q}_{f,{A}}:=\langle f\circ\mathbf{p}_{\Delta',{A\{f\}}},\mathbf{v}_{\Delta',{A\{f\}}} \rangle$, and that satisfy Frobenius reciprocity in the sense that the canonical morphism $\Sigma_{!{A}}(\Xi'\{\mathbf{p}_{\Delta,{A}}\}\otimes B)\ra{} \Xi'\otimes \Sigma_{!{A}}B$ is an isomorphism, for all $\Xi'\in\mathcal{L}(\Delta)$, $B\in\mathcal{L}(\Delta.{A})$.
\item It supports $\Pi$-types iff all the pullback functors $\mathcal{L}(\mathbf{p}_{\Delta,{A}})$ have right adjoints $\Pi_{!{A}}$ that satisfy the dual Beck-Chevalley condition for the pullback squares described above: the canonical $\mathcal{L}(f)\circ\Pi_{!{A}}\ra{}\Pi_{!{A\{f\}}}\circ\mathcal{L}(\mathbf{q}_{f,{A}})$ is an iso.
\item It supports $\multimap$-types iff $\mathcal{L}$ factors over the category $\mathrm{SMCCat}$ of symmetric monoidal closed categories and their homomorphisms.
\item It supports $\top$- and $\&$-types iff $\mathcal{L}$ factors over the category of Cartesian categories with symmetric monoidal structure and their homomorphisms.
\item It supports $0$- and $\oplus$-types iff $\mathcal{L}$ factors over the category $\mathrm{dSMcCCat}$ of co-Cartesian categories with a distributive symmetric monoidal structure and their homomorphisms.
\item If it supports $\multimap$-types, it supports $!$-types iff all the comprehension functors $\mathcal{L}(\Delta)\ra{M_\Delta}\mathcal{I}(\Delta)$ have a strong monoidal left adjoint $\mathcal{I}(\Delta)\ra{L_\Delta}\mathcal{L}(\Delta)$ and $L_-$ is a morphism of indexed categories: for all $\Delta'\ra{f}\Delta\in \mathcal{C}$, $L_{\Delta'}\mathcal{I}(f)=\mathcal{L}(f)L_\Delta$. Then $!_\Delta:=L_\Delta\circ M_\Delta$ interprets the comodality $!$ in context $\Delta$.
\item If it supports $\multimap$-types, it supports $\mathrm{Id}$-types iff for all $A\in\mathrm{ob}(\mathcal{L}(\Delta))$, we have left adjoints $\mathrm{Id}_{!A}\dashv -\{\mathrm{diag}_{\Delta,{A}}\}$ that satisfy Frobenius reciprocity for $\mathrm{diag}_{\Delta,A}$ and a Beck-Chevalley condition: $\mathrm{Id}_{!{A\{f\}}}\circ \mathcal{L}(\mathbf{q}_{f,{A}})\ra{}\mathcal{L}(\mathbf{q}_{\mathbf{q}_{f,{A}},{A\{\mathbf{p}_{\Delta,A}\}}})\circ \mathrm{Id}_{!A}$ is an iso. Here, $\mathrm{Id}_{!A}(I)$ interprets $\mathrm{Id}_{!A}(x,x')$, where
\[
\begin{tikzcd}[column sep=large]
\Delta.A \arrow[r,"{\mathrm{diag}_{\Delta,A}:=\langle \mathrm{id}_{\Delta.A},\mathbf{v}_{\Delta,A}\rangle}"] & \Delta.A.A\{\mathbf{p}_{\Delta,A}\}.
\end{tikzcd}
\]
\end{enumerate}
\end{theorem}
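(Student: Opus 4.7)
The seven items follow a common template: in each direction one verifies that the -F/-I/-E/-C/-U rules together with the commuting conversions encode exactly the displayed categorical structure on $(\mathcal{C},\mathcal{L})$. For the ``if'' direction I would extend the term-interpretation of the soundness theorem using the given adjoint or fibrewise structure, then check that the equations hold on the nose. For the ``only if'' direction I would evaluate the rules in the syntactic model produced by the completeness theorem and read the universal property off directly. Under this template the -I rule supplies a unit (or universal element), the -E rule supplies the transpose of an arbitrary morphism, and the -C/-U pair is exactly the assertion that the two composites are the identities making the hom-set bijection, i.e.\ the triangle identities of the adjunction.

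For (1), the extra linear context $\Xi'$ in $\Sigma$-E is the decisive feature: the rule packages a natural bijection between $\mathcal{L}(\Delta.A)(\Xi'\{\mathbf{p}_{\Delta,A}\}\otimes B,\, C\{\mathbf{p}_{\Delta,A}\})$ and $\mathcal{L}(\Delta)(\Xi'\otimes \Sigma_{!A}B,\, C)$, which by Yoneda simultaneously yields the adjunction $\Sigma_{!A}\dashv \mathcal{L}(\mathbf{p}_{\Delta,A})$ and its Frobenius reciprocity iso; Beck--Chevalley is an immediate consequence of the substitution remark applied to $\Sigma$. Item (2) is the dual, without a Frobenius clause since $\Pi$-E carries no auxiliary linear context. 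Items (3)--(5) are entirely fibrewise: the rules for $\multimap$, for $\top/\&$, and for $0/\oplus$ translate directly into, respectively, symmetric monoidal closed, Cartesian-on-top-of-monoidal, and distributive co-Cartesian-on-top-of-monoidal structure on each $\mathcal{L}(\Delta)$, and preservation by reindexing---the factorisation of $\mathcal{L}$ through the appropriate category of structured functors---is forced by the intuitionistic substitution rule together with the uniqueness principles. Item (7) is the standard characterisation of extensional identity types as left adjoints to diagonal reindexing, with Beck--Chevalley again obtained from the substitution remark.

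The main obstacle is item (6). The $!$-I/-E/-C/-U cluster, read against the comprehension isomorphism of Theorem~\ref{thm:comprfunc}, is precisely the hom-set bijection witnessing $L_\Delta\dashv M_\Delta$ with $!_\Delta = L_\Delta M_\Delta$, while intuitionistic substitution on $!$-terms forces $L_{\Delta'}\mathcal{I}(f)=\mathcal{L}(f)L_\Delta$. The delicate point is showing $L_\Delta$ to be \emph{strong} monoidal rather than merely lax, i.e.\ deriving the Seely-style isomorphisms inside the syntax. This is where the $\multimap$ hypothesis enters, exactly as in the Barber--Plotkin semantics of DILL: Frobenius reciprocity for the adjunction $L\dashv M$ (available once $\otimes$ has a right adjoint) is what promotes the canonical lax-monoidal structure on $L$ to a strong one, and this is the step I expect to require the most care.
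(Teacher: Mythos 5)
The paper states this theorem without proof (it defers to the technical report \cite{vakar2014syntax}), so the comparison here is against the argument implicit in the surrounding setup. Your template is exactly that argument: the -I rule gives the unit/universal element, the -E rule gives the transpose, the -C/-U pair gives the two triangle identities, the commuting conversions give naturality in the target type, and Yoneda converts the resulting natural bijections into the stated adjoints; Beck--Chevalley comes from stability of the type formers under Int-Ty-Subst, and items (3)--(5) are fibrewise with reindexing-preservation forced by the same stability. Your reading of (1) is right: since $\Sigma$-E carries the auxiliary linear context $\Xi'$, the rule packages the bijection $\mathcal{L}(\Delta.A)(\Xi'\{\mathbf{p}_{\Delta,A}\}\otimes B, C\{\mathbf{p}_{\Delta,A}\})\cong\mathcal{L}(\Delta)(\Xi'\otimes\Sigma_{!A}B,C)$, whose $\Xi'=I$ instance is the adjunction and whose general instance is Frobenius. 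Your diagnosis of (6) is also sound: $\multimap$ yields Frobenius reciprocity for the left adjoints, and e.g.\ $\Sigma_{!A}\bigl(({!B})\{\mathbf{p}_{\Delta,A}\}\bigr)\cong {!A}\otimes{!B}$ is precisely the dependent Seely isomorphism that makes $L$ strong rather than merely oplax monoidal.

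The gap is in item (7), where you apply the bare template and miss that the same mechanism is at work there. The $\mathrm{Id}$-E rule also carries an auxiliary linear context: its main premise is $\Delta,z:A;\Xi\vdash d:D[z/x,z/x']$ and the conclusion lives in the combined context $\Xi[a/z],\Xi'$. Consequently the rule encodes more than the adjunction $\mathrm{Id}_{!A}\dashv -\{\mathrm{diag}_{\Delta,A}\}$: it encodes, in addition, the Frobenius-type identity $\mathrm{Id}_{!A}(B)\cong\mathrm{Id}_{!A}(I)\otimes B'$ (for the suitably reindexed $B'$), which is why the statement can get away with exhibiting only $\mathrm{Id}_{!A}(I)$ as the interpretation of $\mathrm{Id}_{!A}(x,x')$. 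As you stated it, your ``if'' direction cannot interpret $\mathrm{Id}$-E at a non-unit $\Xi$ from the adjunction alone, and your ``only if'' direction actually proves a stronger condition than the one you claim to be reading off. The repair is the hypothesis you left unused: exactly as in your items (1) and (6), the assumed $\multimap$-types (equivalently, closedness of the fibres and of the reindexing functors) make the general-$\Xi$ instance derivable from the $\Xi=I$ instance, via $\mathcal{L}(\Delta.A)(\Xi,D\{\mathrm{diag}_{\Delta,A}\})\cong\mathcal{L}(\Delta.A)(I,\Xi\multimap D\{\mathrm{diag}_{\Delta,A}\})$. You should say this explicitly; otherwise the equivalence in (7) does not close.
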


The semantics of $!$ suggests an alternative definition of the notion of comprehension: if we have $\Sigma$-types in a strong sense, it is a derived notion.

\begin{theorem}[Lawvere Comprehension]\label{altcompr} If a strict indexed monoidal category $(\mathcal{C},\mathcal{L})$ has left adjoints $\Sigma_f$ to $\mathcal{L}(f)$ for arbitrary $\Delta'\ra{f}\Delta\in\mathcal{C}$, then we can define $\mathcal{C}/\Delta\ra{L_\Delta}\mathcal{L}(\Delta)$ by
$L_\Delta(g):=\Sigma_g I$ for $g\in\mathcal{C}/\Delta$.
In that case, $(\mathcal{C},\mathcal{L})$ has a comprehension schema iff $L_\Delta$ has a right adjoint $M_\Delta$ (for which $M_{\Delta'}\circ \mathcal{L}(f)=f^*\circ M_\Delta$ for all $\Delta'\ra{f}\Delta\in\mathcal{C}$, where $f^*$ denotes pullback along $f$). That is, our notion of comprehension generalises that of Lawvere \cite{lawvere1970equality}. Finally, if the $\Sigma_f$ satisfy Frobenius reciprocity and Beck-Chevalley, then $(\mathcal{C},\mathcal{L})$ supports comprehension iff it supports $!$-types.
\end{theorem}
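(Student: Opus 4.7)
The plan is to reduce everything to the hypothesised adjunctions $\Sigma_{Lf} \dashv \mathcal{L}(f)$ and then to apply Theorem \ref{thm:semtype}(6) at the end.

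First I would establish the base equivalence: comprehension on $(\mathcal{C},\mathcal{L})$ iff $L_\Delta$ has a right adjoint $M_\Delta$. The observation is that the adjunction $\Sigma_{Lx} \dashv \mathcal{L}(x)$ supplies a natural isomorphism
$$\mathcal{L}(\mathrm{dom}(x))(I, A\{x\}) \;\cong\; \mathcal{L}(\Delta)(\Sigma_{Lx} I, A) \;=\; \mathcal{L}(\Delta)(L_\Delta(x), A)$$
in $x \in \mathcal{C}/\Delta$, so the representability of the left-hand functor required by the comprehension schema is exactly representability of $\mathcal{L}(\Delta)(L_\Delta(-), A)$, i.e.\ the claim $L_\Delta \dashv M_\Delta$ with $M_\Delta(A) = \mathbf{p}_{\Delta, A}$. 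The compatibility $M_{\Delta'}\mathcal{L}(f) \cong f^* M_\Delta$ falls out of uniqueness of representing objects, via the Yoneda computation $\mathcal{C}/\Delta'(y, f^*M_\Delta A) = \mathcal{C}/\Delta(f\circ y, M_\Delta A) \cong \mathcal{L}(\mathrm{dom}(y))(I, A\{f\}\{y\}) \cong \mathcal{C}/\Delta'(y, M_{\Delta'}(A\{f\}))$. At this point the Lawvere comparison is a remark: the $L_\Delta = \Sigma_{L-} I$ just defined is precisely the ``extent'' functor in a hyperdoctrine, so $M_\Delta$ as its right adjoint recovers Lawvere's original notion of comprehension.

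For the final equivalence I would additionally assume Frobenius reciprocity and Beck--Chevalley for the $\Sigma_{Lf}$ and invoke Theorem \ref{thm:semtype}(6), which reduces the question to showing that $L_\Delta$ is strong monoidal and that $L_-$ is a morphism of indexed categories. The direction ``$!$-types implies comprehension'' is then immediate: Theorem \ref{thm:semtype}(6) already hands us $L_\Delta \dashv M_\Delta$, and the first iff converts this into comprehension. For the converse, unit preservation $L_\Delta(\mathrm{id}_\Delta) = \Sigma_{\mathrm{id}} I \cong I$ is trivial, and for the tensor, given $x : \Delta_1 \to \Delta$ and $y : \Delta_2 \to \Delta$ with pullback projections $p : \Delta_3 \to \Delta_1$ and $q : \Delta_3 \to \Delta_2$, I would chain
$$L_\Delta(x \times_\Delta y) = \Sigma_x\Sigma_p I \;\cong\; \Sigma_x(\mathcal{L}(x)\Sigma_y I) \;\cong\; \Sigma_y I \otimes \Sigma_x I,$$
where the first iso is Beck--Chevalley combined with $\mathcal{L}(q) I \cong I$ (strong monoidality of reindexing) and the second is Frobenius on $x$ together with the unit law. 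The indexed-naturality $L_{\Delta'}\mathcal{I}(f) = \mathcal{L}(f)L_\Delta$ is the same Beck--Chevalley identity applied to a display map $\mathbf{p}_{\Delta,A}$, whose pullback along $f$ is $\mathbf{p}_{\Delta',A\{f\}}$.

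The main obstacle will be checking that the above chain of isomorphisms really coincides with the canonical comparison map $L_\Delta(x) \otimes L_\Delta(y) \to L_\Delta(x \times_\Delta y)$: one has to track naturality squares for Frobenius and Beck--Chevalley and appeal to symmetry of $\otimes$ to reconcile the order. Once that coherence is in place, the remaining monoidal coherence laws for $L_\Delta$ and the verification of indexed-naturality are routine adjoint and Yoneda chasing.
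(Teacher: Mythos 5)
Your proposal is correct and follows essentially the same route as the paper, whose proof is just the one-line observation that the representability condition defining comprehension, transported along the adjunction $\Sigma_{Lx}\dashv\mathcal{L}(x)$, becomes the representability of $\mathcal{L}(\Delta)(L_\Delta(-),A)$, i.e.\ the existence of $M_\Delta$. You additionally spell out the strong monoidality of $L_\Delta$ via Frobenius and Beck--Chevalley for the $!$-types clause, which the paper leaves implicit; that part of your argument is sound.
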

\begin{proof}[sketch] This follows by writing out both the representability condition defining a comprehension and the adjointness condition for $\Sigma_{f}$.\end{proof}
We observe the following about the usual intuitionistic type formers in $\mathcal{I}$.

\begin{theorem}[Type Formers in $\mathcal{I}$]\label{thm:inttyp}$\mathcal{I}$ supports $\Sigma$-types iff $\mathrm{ob}(\mathcal{I})\subset \mathrm{mor}(\mathcal{C})$ is closed under binary compositions. $\mathcal{I}$ supports $\mathrm{Id}$-types iff $\mathrm{ob}(\mathcal{I})$ is closed under post-composition with $\mathrm{diag}_{\Delta,A}$. If $\mathcal{L}$ supports $!$- and $\Pi$-types, then $\mathcal{I}$ supports $\Pi$-types. Moreover, type formers in $\mathcal{I}$ relate to those in $\mathcal{L}$ as follows, leaving out the subscripts of the indexed functors $L\dashv M$:\vspace{-3pt}
$$\Sigma_{!A}! B\cong L(\Sigma_{M A}MB) \qquad \mathrm{Id}_{!A}(!B)\cong L\mathrm{Id}_{MA}(MB)\qquad M\Pi_{!B}C\cong \Pi_{MB}MC.
$$
\end{theorem}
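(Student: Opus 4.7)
The plan is to handle each part separately, exploiting the fact that $\mathcal{I}$ is a full sub-indexed category of the slice $\mathcal{C}/-$, together with the fiberwise adjunction $L\dashv M$ and the hypothesis that $L$ commutes with reindexing. For parts 1 and 2, I would start from the observation that in $\mathcal{C}/\Delta$ the left adjoint to pullback along any morphism $f$ is simply post-composition with $f$, with Beck-Chevalley and Frobenius reciprocity both automatic in a slice. Restricting these ambient left adjoints to $\mathcal{I}$ therefore reduces the existence of $\Sigma^{\mathcal{I}}$ (resp.\ $\mathrm{Id}^{\mathcal{I}}$) to closure of $\mathrm{ob}(\mathcal{I})$ under post-composition with the $\mathbf{p}_A$'s (resp.\ with the diagonals $\mathrm{diag}_{\Delta,A}$), and yields the concrete formulas $\Sigma^{\mathcal{I}}_{MA}MB=\mathbf{p}_A\circ\mathbf{p}_B$ and $\mathrm{Id}^{\mathcal{I}}_{MA}(MB)=\mathrm{diag}_{\Delta,A}\circ\mathbf{p}_B$.

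For part 3, I would construct the right adjoint to $\mathcal{I}(\mathbf{p}_A)$ by setting $\Pi^{\mathcal{I}}_{MA}(MC):=M\Pi^{\mathcal{L}}_{!A}C$. The adjunction should follow from the chain of natural bijections
\[
\mathcal{I}(\Delta.A)(\mathcal{I}(\mathbf{p}_A)Y,MC)\cong \mathcal{L}(\Delta.A)(L\mathcal{I}(\mathbf{p}_A)Y,C)\cong \mathcal{L}(\Delta.A)(\mathcal{L}(\mathbf{p}_A)LY,C)\cong \mathcal{L}(\Delta)(LY,\Pi^{\mathcal{L}}_{!A}C)\cong \mathcal{I}(\Delta)(Y,M\Pi^{\mathcal{L}}_{!A}C),
\]
invoking $L\dashv M$ at the ends, the reindexing-commutation of $L$ in the middle, and the $\Pi^{\mathcal{L}}$-adjunction. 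Beck-Chevalley for $\Pi^{\mathcal{I}}$ then follows by taking mates of the Beck-Chevalley condition for $\Pi^{\mathcal{L}}$, using also the reindexing-commutation of $M$, which is immediate from $M(-)=\mathbf{p}_{\Delta,-}$. Conveniently, this same calculation already delivers the first displayed isomorphism $M\Pi^{\mathcal{L}}_{!B}C\cong \Pi^{\mathcal{I}}_{MB}MC$ of part 4.

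For the remaining two isomorphisms of part 4, I would combine parts 1 and 2 with Lawvere comprehension (Theorem~\ref{altcompr}). The formula $L_\Delta(f)=\Sigma^{\mathcal{L}}_f I$ applied to the composite $\mathbf{p}_A\circ\mathbf{p}_B=\Sigma^{\mathcal{I}}_{MA}MB$, together with the fact that left adjoints of composites factor in reverse, gives $L(\Sigma^{\mathcal{I}}_{MA}MB)\cong \Sigma^{\mathcal{L}}_{!A}\Sigma^{\mathcal{L}}_{!B}I\cong \Sigma^{\mathcal{L}}_{!A}\,!B$, the final step being Theorem~\ref{thm:!fromsigma}. The identity-type case is formally identical, invoking instead that the left adjoint of reindexing along $\mathrm{diag}_{\Delta,A}$ is $\mathrm{Id}^{\mathcal{L}}_{!A}$ (Theorem~\ref{thm:semtype}, part 7), so that $L(\mathrm{diag}_{\Delta,A}\circ\mathbf{p}_B)\cong \mathrm{Id}^{\mathcal{L}}_{!A}\Sigma^{\mathcal{L}}_{!B}I\cong \mathrm{Id}^{\mathcal{L}}_{!A}(!B)$. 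The main subtlety I anticipate lies in the well-definedness of $\Pi^{\mathcal{I}}_{MA}$: distinct $C,C'\in\mathcal{L}(\Delta.A)$ may yield isomorphic $MC\cong MC'$, so the assignment $MC\mapsto M\Pi^{\mathcal{L}}_{!A}C$ is only well-defined up to canonical isomorphism extracted by Yoneda from the hom-bijection above. Once this representability is set up cleanly, functoriality on $\mathcal{I}$ and compatibility with the Beck-Chevalley squares follow routinely.
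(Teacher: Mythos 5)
Your proposal is correct and follows the route the paper intends (the paper itself omits the proof, deferring to the technical report \cite{vakar2014syntax}): post-composition as the ambient left adjoint in the slice for $\Sigma$ and $\mathrm{Id}$, transport of $\Pi$ across $L\dashv M$ using the strict commutation of $L$ and $M$ with reindexing, and Theorems \ref{altcompr} and \ref{thm:!fromsigma} for the displayed isomorphisms. The only point worth making explicit is that the ``only if'' directions of the first two claims rely on the fullness of $\mathcal{I}$ in $\mathcal{C}/-$, so that an adjoint-style $\Sigma^{\mathcal{I}}_{MA}MB$ is forced to be isomorphic in the slice to $\mathbf{p}_{\Delta,A}\circ\mathbf{p}_{\Delta.A,B}$; you invoke fullness at the outset but do not spell out this step.
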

\begin{remark}[Dependent Seely Isomorphisms?] It is easily seen that $M_\Delta(\top)\cong\mathrm{id}_\Delta$ and $M_\Delta(A\& B)\cong M_\Delta(A)\times M_\Delta(B)$, and hence $!_\Delta\top\cong I$ and $!_\Delta(A\& B)\cong !_\Delta A\otimes !_\Delta B$.

Now, Theorem \ref{thm:inttyp} suggests similar Seely isomorphisms for $\Sigma$- and $\mathrm{Id}$-types. Indeed, $\mathcal{I}$ supports $\Sigma$-types, respectively $\mathrm{Id}$-types, iff we have ``additive'' $\Sigma$-types, respectively $\mathrm{Id}$-types, that is, $\Sigma_A^\& B,\mathrm{Id}_A^\&(B)\in\mathrm{ob}(\mathcal{L})$ such that
$$M\Sigma_A^\& B\cong \Sigma_{MA}MB\txt{and hence} !\Sigma_A^\& B\cong \Sigma_{!A}^\otimes !B\txt{respectively}$$
$$M\mathrm{Id}_A^\&(B)\cong \mathrm{Id}_{MA}(MB)\txt{and hence} !\mathrm{Id}_A^\&(B)\cong \mathrm{Id}_{!A}^\otimes(!B),$$
where we write $\Sigma^\otimes$ and $\mathrm{Id}^\otimes$ for the usual multiplicative $\Sigma$- and $\mathrm{Id}$-types\footnote{We call the usual $\mathrm{Id}$-types ``multiplicative'' connectives, for instance because $\mathrm{Id}^\otimes_{!A}(B)\cong \mathrm{Id}^\otimes_{!A}(I)\otimes B$. Similarly, if we have a suitable $\mathrm{Id}^\&_{A}(\top)$, we can define $\mathrm{Id}^\&_{A}(B):= \mathrm{Id}^\&_A(\top)\& B$.}.

This situation arises, and such additive $\Sigma$- and $\mathrm{Id}$-types have to be considered when $L_\cdot \dashv M_\cdot: \mathcal{L}(\cdot)\ra{}\mathcal{C}$ is the co-Kleisli adjunction of $!$. See \cite{vakar2014syntax} for more discussion.\end{remark}
\section{Some Discrete Models: Monoidal Families}
\vspace{-7pt}
\label{sec:dismod}
We discuss a simple class of models in terms of families with values in a symmetric monoidal category. On a logical level, the construction amounts to starting with a model $\mathcal{V}$ of a linear propositional logic and taking the cofree linear predicate logic on $\mathrm{Set}$ with values in this propositional logic. This important example illustrates how $\Sigma$- and $\Pi$-types can represent infinitary additive disjunctions and conjunctions. The model is discrete in nature, however, and, in that respect, is not representative of ILDTT.

Suppose $\mathcal{V}$ is a symmetric monoidal category. We can then consider a strict $\mathrm{Set}$-indexed category, defined through the following enriched Yoneda embedding $\mathrm{Fam}(\mathcal{V}):={\mathcal{V}}^{-}:=\mathrm{SMCat}(-,\mathcal{V})$:
\[
\begin{tikzcd}[column sep=normal,ampersand replacement=\&]
\mathrm{Set}^{op} \arrow[r,"{\mathrm{Fam}(\mathcal{V})}"] \& \mathrm{SMCat} \& \& S\ra{f}S' \arrow[r,mapsto] \& \mathcal{V}^S\stackrel{-\circ f}{\longleftarrow} \mathcal{V}^{S'}.
\end{tikzcd}
\]
Note that this definition naturally extends to a functorial embedding $\mathrm{Fam}$.
\begin{theorem}[Families Model ILDTT] The construction $\mathrm{Fam}$ adds type dependency over $\mathrm{Set}$ cofreely, in the sense that it is right adjoint to the forgetful functor $\mathrm{ev}_1$ that evaluates a model of linear dependent type theory at the empty context to obtain a model of linear propositional type theory (where $\mathrm{SMCat}_{\mathrm{compr}}^{\mathrm{Set}^{op}}$ is the full subcategory of $\mathrm{SMCat}^{\mathrm{Set}^{op}}$ on the objects with comprehension):
\[
\begin{tikzcd}[column sep=huge,ampersand replacement=\&]
\mathrm{SMCat} \arrow[r,hook,shift right=1.5ex,"{\mathrm{Fam}}"'] \& \mathrm{SMCat}_{\mathrm{compr}}^{\mathrm{Set}^{op}}. \arrow[l,shift right=1.5ex,"{\mathrm{ev}_1}"']
\arrow[from=1-1,to=1-2,phantom,"\bot" description]
\end{tikzcd}
\]
\end{theorem}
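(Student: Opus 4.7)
The goal is to establish an adjunction $\mathrm{ev}_1 \dashv \mathrm{Fam}$, i.e. a natural bijection
\[
\mathrm{SMCat}(\mathcal{L}(1), \mathcal{V}) \;\cong\; \mathrm{SMCat}^{\mathrm{Set}^{op}}_{\mathrm{compr}}(\mathcal{L}, \mathrm{Fam}(\mathcal{V})).
\]
A crucial observation up front is that $\mathrm{SMCat}^{\mathrm{Set}^{op}}_{\mathrm{compr}}$ is the \emph{full} subcategory on objects equipped with comprehension, so its morphisms are simply natural transformations of $\mathrm{Set}^{op}$-indexed strict SMCats (families of strong monoidal $\alpha_S : \mathcal{L}(S) \to \mathcal{V}^S$ that commute strictly with reindexing), with no comprehension-preservation condition to discharge.

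First I would verify that $\mathrm{Fam}(\mathcal{V})$ actually lies in $\mathrm{SMCat}^{\mathrm{Set}^{op}}_{\mathrm{compr}}$. Given $A \in \mathcal{V}^S$, I set $S.A := \{(s,v) \mid s \in S,\ v \in \mathcal{V}(I, A(s))\}$, with projection $\mathbf{p}_{S,A}(s,v) := s$ and universal element $\mathbf{v}_{S,A}(s,v) := v$. Unpacking the representability condition, a morphism $f : S' \to S$ equipped with an element of $\mathrm{Fam}(\mathcal{V})(S')(I, A\{f\}) = \prod_{s' \in S'}\mathcal{V}(I, A(f(s')))$ corresponds bijectively to a map over $S$ of the form $s' \mapsto (f(s'), v_{s'})$, exactly as required.

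Next I would construct the unit $\eta_{\mathcal{L}} : \mathcal{L} \to \mathrm{Fam}(\mathcal{L}(1))$ by $\eta_{\mathcal{L},S}(A) := (s \mapsto A\{s\})$, where $s : 1 \to S$ ranges over elements of $S$, and analogously on morphisms. Strong monoidality follows because the SMC structure on $\mathcal{L}(1)^S$ is pointwise and each reindexing $\mathcal{L}(s)$ is strong monoidal; strict naturality in $f : S' \to S$ boils down to the identity $A\{f\}\{s'\} = A\{f \circ s'\}$, which holds by strictness of the indexing. The counit is the canonical iso $\epsilon_\mathcal{V} : \mathrm{Fam}(\mathcal{V})(1) = \mathcal{V}^1 \to \mathcal{V}$. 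Given $G : \mathcal{L}(1) \to \mathcal{V}$, the transpose is $\alpha_S(A) := (s \mapsto G(A\{s\}))$, and, given $\alpha$, its transpose is $\epsilon_\mathcal{V} \circ \alpha_1$; the two constructions invert each other essentially on inspection, and the triangle identities reduce to $(s \mapsto A\{s\})$ evaluated at the unique element of $1$ returning $A$ itself.

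The real obstacle is conceptual bookkeeping rather than any single hard step. One must (i) keep the strict indexed structure rigidly in view so that the naturality squares hold on the nose, (ii) resist requiring comprehension preservation on morphisms — if one did, one would be forced to ask that the canonical comparison $S.A \to \{(s,v) : v \in \mathcal{V}(I, G(A\{s\}))\}$ induced by applying $G$ to global elements be a bijection, collapsing the adjunction to those $G$ that are fully faithful on $I$-points — and (iii) check that $\mathrm{Fam}(G) := G \circ -$ is itself strong monoidal componentwise, which is immediate from the pointwise structure. Naturality of the bijection in both $\mathcal{L}$ and $\mathcal{V}$ is then a routine diagram chase.
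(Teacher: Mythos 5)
Your proposal is correct and follows essentially the same route as the paper's own sketch: the comprehension on $\mathrm{Fam}(\mathcal{V})$ is given by $\mathbf{p}_{S,A}:\coprod_{s\in S}\mathcal{V}(I,A(s))\to S$ (your set of pairs $(s,v)$ is exactly this coproduct), and the adjunction is the ``straightforward verification'' the paper alludes to, with your unit $\eta_{\mathcal{L},S}(A):=(s\mapsto A\{s\})$ built from global elements $1\to S$ being precisely where the paper's appeal to well-pointedness of $\mathrm{Set}$ enters. Your observations that the subcategory being \emph{full} is essential and that strictness of the indexing makes the naturality squares commute on the nose are accurate and correctly identify the only delicate points.
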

\begin{proof}[sketch] The comprehension on $\mathrm{Fam}(\mathcal{V})$ is given by the obvious bijection
$$\mathrm{Fam}(\mathcal{V})(S)(I,B\{f\})\cong \prod_{s\in S}\mathcal{V}(I,B(f(s)))  \cong \mathrm{Set}/S'(f,\mathbf{p}_{S',{B}}),$$
where $\mathbf{p}_{S',{B}}:=\coprod_{s'\in S'}\mathcal{V}(I,B(s'))\ra{\mathrm{fst}}S'$. The rest of the proof is a straightforward verification, where the adjunction relies on $\mathrm{Set}$ being well-pointed.\end{proof}
We express the existence of type formers in $\mathrm{Fam}(\mathcal{V})$ as conditions on $\mathcal{V}$. A characterisation of additive $\Sigma$- and $\mathrm{Id}$-types can be found in \cite{vakar2014syntax}.

\begin{theorem}[Type Formers for Families]$\mathcal{V}$ has small coproducts that distribute over $\otimes$ iff $\mathrm{Fam}(\mathcal{V})$ supports $\Sigma$-types. In that case, $\mathrm{Fam}(\mathcal{V})$ also supports $0$- and $\oplus$-types (which correspond precisely to finite distributive coproducts).

$\mathcal{V}$ has small products iff $\mathrm{Fam}(\mathcal{V})$ supports $\Pi$-types. In that case, $\mathrm{Fam}(\mathcal{V})$ also supports $\top$- and $\&$-types (which correspond precisely to finite products).

$\mathrm{Fam}(\mathcal{V})$ supports $\multimap$-types iff $\mathcal{V}$ is monoidal closed. 

$\mathrm{Fam}(\mathcal{V})$ supports $!$-types iff $\mathcal{V}$ has small coproducts of $I$ that are preserved by $\otimes$ in the sense that the canonical morphism $\coprod_S(\Xi'\otimes I)\ra{}\Xi'\otimes \coprod_S I$ is an isomorphism for any $\Xi'\in\mathrm{ob}\;\mathcal{V}$ and $S\in\mathrm{ob}\;\mathrm{Set}$. In particular, if $\mathrm{Fam}(\mathcal{V})$ supports $\Sigma$-types, then it also supports $!$-types.

$\mathrm{Fam}(\mathcal{V})$ supports $\mathrm{Id}$-types if $\mathcal{V}$ has an initial object. If $\mathcal{V}$ has a terminal object, the only-if direction also holds.
\end{theorem}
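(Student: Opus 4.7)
The plan is to apply Theorem~\ref{thm:semtype} to the indexed category $\mathrm{Fam}(\mathcal{V})$, translating each abstract adjointness or closure condition into a concrete condition on $\mathcal{V}$. The basic observation that drives everything is that for a function $f:S'\to S$ in $\mathrm{Set}$, the reindexing functor $\mathcal{L}(f):\mathcal{V}^{S}\to\mathcal{V}^{S'}$ is just precomposition with $f$, so a pointwise left (resp.\ right) adjoint exists exactly when $\mathcal{V}$ has small coproducts (resp.\ products), and is given fibrewise by $(\Sigma_{f}B)(s)=\coprod_{f(s')=s}B(s')$ (resp.\ the analogous product). Since $\mathrm{Set}$ pullbacks are computed fibrewise as well, Beck--Chevalley holds automatically.

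First I would dispatch $\Pi$, $\top$, $\&$, $0$, $\oplus$, and $\multimap$. For $\Pi$-types, the above fibrewise formula gives the required right adjoints; $\top$ and $\&$ are then the special case $S=\{0,1\}\to\{*\}$, and similarly $0$ and $\oplus$ fall out of the $\Sigma$-case. For $\multimap$-types, one checks that $\mathcal{V}^{S}$ is symmetric monoidal closed iff $\mathcal{V}$ is, with internal hom computed pointwise, and that precomposition with $f$ preserves this structure strictly. The $\Sigma$-type case is where Frobenius reciprocity enters: unpacking the canonical map $\Sigma_{!A}(\Xi'\{\mathbf{p}\}\otimes B)\to\Xi'\otimes\Sigma_{!A}B$ on fibres, one finds it is precisely the canonical comparison $\coprod_{i}(\Xi'\otimes B_{i})\to\Xi'\otimes\coprod_{i}B_{i}$, so Frobenius over $\mathrm{Set}$-indexed comprehensions is exactly distributivity of small coproducts over $\otimes$.

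Next I would handle $!$-types. Using the explicit formula $\mathbf{p}_{S,B}=(\coprod_{s\in S}\mathcal{V}(I,B(s))\to S)$ from the sketch of Theorem~11, the comprehension functor $M_{S}:\mathcal{V}^{S}\to\mathrm{Set}/S$ is given by $B\mapsto(\coprod_{s}\mathcal{V}(I,B(s))\to S)$. Its putative left adjoint $L_{S}$ must send $(T\xrightarrow{f}S)\in\mathrm{Set}/S$ to the family $s\mapsto\coprod_{t\in f^{-1}(s)}I$, and the adjunction and indexed naturality conditions reduce to routine checks in $\mathrm{Set}$. The strong monoidality of $L_{S}$ unpacks exactly to the stated condition: the canonical map $\coprod_{S}(\Xi'\otimes I)\to\Xi'\otimes\coprod_{S}I$ is an isomorphism for every $\Xi'$ and every set $S$. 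The final clause---that $\Sigma$-types imply $!$-types---then follows because distributivity of small coproducts over $\otimes$ is strictly stronger than the coproducts-of-$I$ version.

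The main delicate point, and the one I would treat last, is $\mathrm{Id}$-types. Here $\mathrm{diag}_{S,A}$ corresponds to the set-theoretic diagonal $\Delta_{S}:S\to S\times S$ (after unfolding comprehensions), and reindexing along it collapses a family over $S\times S$ to its restriction to the diagonal. The candidate left adjoint $\mathrm{Id}_{!A}$ sends $C\in\mathcal{V}^{S}$ to the family on $S\times S$ equal to $C(s)$ at $(s,s)$ and to the initial object $0$ off the diagonal, which manifestly exists whenever $\mathcal{V}$ has an initial object and verifies the required Beck--Chevalley square fibrewise. For the converse under the hypothesis that $\mathcal{V}$ has a terminal object $\top$, I would take $S=\{0,1\}$ and the constant family $A\equiv\top$: the $\mathrm{Id}$-type produces a family $F$ on $\{0,1\}^{2}$ whose off-diagonal value $F(0,1)$ must, by the adjunction applied to constant families, be initial in $\mathcal{V}$, since $\mathcal{V}^{\{0,1\}^{2}}(F,G)\cong\mathcal{V}^{\{0,1\}}(\top,\Delta_{S}^{*}G)$ forces $\mathcal{V}(F(0,1),G(0,1))$ to be a singleton for every $G$. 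This last extraction step---showing that the adjoint really does produce an initial object and is not merely vacuously definable---is where I expect the most care is needed, and it is precisely the role of the terminal object hypothesis, which supplies enough nontrivial families to probe the off-diagonal fibre.
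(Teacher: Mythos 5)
Most of your proposal coincides with the paper's own argument: reindexing is precomposition, $\Sigma_{f}$ and $\Pi_{f}$ are the fibrewise coproducts and products, Beck--Chevalley is automatic because substitution is precomposition and pullbacks in $\mathrm{Set}$ are computed fibrewise, Frobenius reciprocity unwinds to distributivity of small coproducts over $\otimes$, the types $\top,\&,0,\oplus$ are the finite special cases, $\multimap$ is pointwise, and $L_{S}(T\to S)(s)=\coprod_{t\in f^{-1}(s)}I$ yields $!A(s)=\coprod_{\mathcal{V}(I,A(s))}I$ together with the stated preservation condition. All of that is correct and is exactly the paper's route.

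The gap is in the $\mathrm{Id}$-case. You identify $\mathrm{diag}_{S,A}$ with the set-theoretic diagonal $S\to S\times S$, but after unfolding the comprehension it is the \emph{fibrewise} diagonal $\coprod_{s\in S}\mathcal{V}(I,A(s))\to\coprod_{s\in S}\mathcal{V}(I,A(s))\times\mathcal{V}(I,A(s))$, sending $(s,a)\mapsto(s,a,a)$; this is why the paper's formula reads $\mathrm{Id}_{!A}(B)(s,a,a')$ rather than a family on $S\times S$. The forward direction survives this slip, since extension by $0$ along any injection of sets is left adjoint to restriction. But your converse collapses: with $A\equiv\top$ every fibre $\mathcal{V}(I,\top)$ is a singleton, so $\mathrm{diag}_{S,\top}$ is a bijection and there is no off-diagonal point $(0,1)$ to probe. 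To run the converse you need some $S$ and $A$ with $\lvert\mathcal{V}(I,A(s))\rvert\geq 2$ for some $s$, so that a point $(s,a,a')$ with $a\neq a'$ actually exists; the terminal object's role is then to set the probe family $G$ equal to $1$ at every point other than $(s,a,a')$, so that the product of hom-sets in the adjunction isomorphism collapses to the single factor $\mathcal{V}(\mathrm{Id}_{!A}(B)(s,a,a'),X)\cong\{*\}$ for arbitrary $X$, forcing that value to be initial. Your chosen probe supplies no such off-diagonal point, so the extraction step you correctly flag as delicate is, as written, vacuous.
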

\begin{proof}[sketch] We supply some definitions and leave the rest to the reader.

$\top$-, $\&$-, $0$-, and $\oplus$-types are clear, since (co)limits are pointwise in a functor category. $\multimap$-types also follow immediately from the previous section. We define $\Sigma_f(A)(s'):=\coprod_{s\in f^{-1}(s')}A(s)$ and $\Pi_f(A)(s')=\prod_{s\in f^{-1}(s')}A(s)$. Then $\Sigma_f\dashv -\{f\} \dashv \Pi_f$. We define $\mathrm{Id}_{!A}(B)(s,a,a'):=\left\{\begin{array}{l}B(s,a)\textnormal{ if $a=a'$}\\ 0\textnormal{ else} \end{array}\right.$. Then, $\mathrm{Id}_{!A}\dashv -\{\mathrm{diag}_{S,A}\}$. The Beck-Chevalley conditions follow from the fact that substitution is interpreted as precomposition. Finally, this leads to the definition $!A(s):=\coprod_{\mathcal{V}(I,A(s))}I$, which only depends on $A(s)$.
\end{proof}

\begin{remark}Note that an obvious way to guarantee distributivity of coproducts over $\otimes$ is by demanding that $\mathcal{V}$ be monoidal closed.
\end{remark}\vspace{-2pt}
Two simple concrete choices of $\mathcal{V}$ accommodate all type formers and illustrate real \emph{linear} type dependency: the category $\mathcal{V}=\mathrm{Vect}_F$ of vector spaces over a field $F$, with the tensor product, and the category $\mathcal{V}=\mathrm{Set}_*$ of pointed sets, with the smash product. All type formers have their obvious interpretations, but let us consider $!$: a novelty of ILDTT is that it is uniquely determined by the indexing, while in propositional linear logic we might have several choices. In the first example, $!$ is given by: $(!B)(s')=\coprod_{ \mathrm{Vect}_F(F,B(s'))}F\cong   \bigoplus_{ B(s')}F$, i.e. the vector space freely spanned by all vectors. In the second example, $(!B)(s')=\coprod_{\mathrm{Set}_*(2_*,B(s'))}2_*=\bigvee_{B(s')}2_*=B(s')+\{*\}$, i.e. $!$ freely adds a new basepoint. These models show the following.
\begin{theorem}[DTT, DILL$\subsetneq$ ILDTT] ILDTT is a proper generalisation of DTT and DILL: we have inclusions of the classes of models DTT, DILL$\subsetneq$ILDTT.\end{theorem}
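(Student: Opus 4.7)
The plan is to establish the two inclusions $\mathrm{DTT}\subseteq \mathrm{ILDTT}$ and $\mathrm{DILL}\subseteq \mathrm{ILDTT}$ separately, and then to exhibit concrete ILDTT models witnessing that both inclusions are strict.

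For $\mathrm{DTT}\subseteq\mathrm{ILDTT}$, I would invoke the remark following the definition of our categorical model: a (split, full) comprehension category with $1$- and $\times$-types is the special case of a strict indexed symmetric monoidal category with comprehension in which each fibre is equipped with its Cartesian monoidal structure (and all reindexing functors preserve it on the nose). Under this identification, Theorem \ref{thm:semtype} matches DTT's $\Sigma$, $\Pi$, $\mathrm{Id}$, and function-types with their ILDTT analogues, while $!$ collapses to the identity since $L\dashv M$ becomes an adjoint equivalence when $\otimes$ is $\times$ and $I$ is terminal.

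For $\mathrm{DILL}\subseteq\mathrm{ILDTT}$, I would use the $\mathrm{Fam}$ construction of section \ref{sec:dismod}: given a DILL model $\mathcal{V}$, the indexed category $\mathrm{Fam}(\mathcal{V})=\mathcal{V}^{(-)}$ over $\mathrm{Set}$ is an ILDTT model with comprehension, and the theorem that $\mathrm{Fam}\dashv \mathrm{ev}_1$ with $\mathrm{ev}_1\circ\mathrm{Fam}\cong \mathrm{id}$ makes $\mathrm{Fam}$ a fully faithful right adjoint, hence an inclusion of classes of models. The last theorem of section \ref{sec:dismod} shows that the DILL-level type formers of $\mathcal{V}$ carry over correctly to ILDTT-level type formers of $\mathrm{Fam}(\mathcal{V})$.

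For strictness, I would use the two examples already discussed: $\mathrm{Fam}(\mathrm{Vect}_F)$ and $\mathrm{Fam}(\mathrm{Set}_*)$. These are ILDTT models supporting all type formers. They are not in the image of the DTT inclusion because their fibre monoidal structure is genuinely non-Cartesian; concretely, $!$ is not the identity, e.g.\ $!F\cong \bigoplus_{F}F\ncong F$ in $\mathrm{Vect}_F$, whereas any DTT model forces $!A\cong A$. They are not in the image of the DILL inclusion because the base category $\mathrm{Set}$ is non-trivial, producing genuine variation of fibres under reindexing — pulling back along a non-identity map of sets is not an equivalence, so the indexed category is not (equivalent to) the constant indexed category that the DILL inclusion would produce.

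The main obstacle is pinning down what ``inclusion of classes of models'' means rigorously — one has to fix a notion of morphism of models (the evident indexed strong-monoidal functors preserving comprehension, as is implicit in the $\mathrm{Fam}\dashv\mathrm{ev}_1$ statement) and verify that the two embeddings are functorial and reflect enough structure to detect the invariants used in the strictness argument (triviality of $!$, and triviality of the base category). Once this bookkeeping is in place, the properness argument is just the observation that $\mathrm{Fam}(\mathrm{Vect}_F)$ (respectively $\mathrm{Fam}(\mathrm{Set}_*)$) falsifies both invariants simultaneously.
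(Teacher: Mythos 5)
Your overall strategy is the paper's own: the paper offers no explicit proof of this theorem beyond the sentence ``These models show the following'', the models in question being $\mathrm{Fam}(\mathrm{Vect}_F)$ and $\mathrm{Fam}(\mathrm{Set}_*)$, and your reading of the DTT inclusion as the Cartesian/full special case (so that $!=LM\cong\mathrm{id}$, whence non-triviality of $!$ certifies non-membership) is exactly the intended content of the Remark following the definition of the semantics.

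There is, however, an internal inconsistency in your properness argument for the DILL inclusion. In your second paragraph you take the embedding $\mathrm{DILL}\hookrightarrow\mathrm{ILDTT}$ to be $\mathcal{V}\mapsto\mathrm{Fam}(\mathcal{V})$; but then $\mathrm{Fam}(\mathrm{Vect}_F)$ and $\mathrm{Fam}(\mathrm{Set}_*)$ lie in the image of that embedding by construction, so they cannot witness its properness. Your strictness paragraph instead argues against membership in the image of a different, constant-fibre embedding --- and note that a constant indexed category (over $\mathrm{Set}$ or over the terminal base) generally fails the comprehension axiom, since representing the constant functor $x\mapsto\mathcal{V}(I,A)$ forces $\mathcal{V}(I,A)$ to be a singleton; so that alternative embedding does not even land in the class of ILDTT models. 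To repair the step you must either (a) keep $\mathrm{Fam}$ as the embedding and exhibit an ILDTT model not of the form $\mathrm{Fam}(\mathcal{V})$ (any DTT model with a base other than discrete $\mathrm{Set}$-indexed families, e.g.\ a presheaf model, will do), or (b) phrase properness as the paper implicitly does: $\mathrm{Fam}(\mathrm{Vect}_F)$ is simultaneously not a DTT model (non-Cartesian fibres, $!F\cong\bigoplus_F F\ncong F$) and not merely a DILL model (it carries genuine type dependency, which the forgetful functor $\mathrm{ev}_1$ discards). Either fix is routine, but as written the sentence ``they are not in the image of the DILL inclusion'' is false for the inclusion you chose.
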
\vspace{-4pt}
Although this class of models is important, it is clear that it represents only a limited part of the generality of ILDTT. Hence, we need non-Cartesian models that are less discrete in nature if we hope to observe interesting new phenomena arising from the connectives of linear dependent type theory. Some suggestions and work in progress will be discussed in the next section.
\vspace{-4pt}
\section{Conclusions and Future Work}\vspace{-7pt}
We hope to have convinced the reader that linear dependent types fit naturally into the landscape of existing type theories and that they admit a well-behaved semantic theory.

We have presented a system, ILDTT, that, on a syntactic level, is a natural blend between (intuitionistic) dependent type theory (DTT) and dual intuitionistic linear logic (DILL). On a semantic level, if one starts with the right notion of model for dependent types, the linear generalisation is obtained by the usual passage from Cartesian to symmetric monoidal structures. The resulting notion of a model forms a natural blend between comprehension categories, modelling DTT, and linear-non-linear models of DILL.

It is notable that all the syntactically natural rules for type formers are equivalent to the semantic counterparts one would expect from the traditions of categorical logic for dependent and linear types. In particular, from the point of view of logic, it is interesting to see that the categorical semantics seems to have a preference for multiplicative quantifiers.

Finally, we have shown that, as in the intuitionistic case, we can represent infinitary (additive) disjunctions and conjunctions in linear type theory, through cofree $\Sigma$- and $\Pi$-types, indexed over $\mathrm{Set}$. In particular, this construction exhibits a family of non-trivial, truly linear models of dependent types. Moreover, it shows that ILDTT properly extends both DILL and DTT.

Despite what might be expected from this paper, much of this work has been strongly motivated by specific semantic models. In joint work with Samson Abramsky, a model of linear dependent types with comprehension has been constructed in a category of coherence spaces. Apart from the usual type constructors from linear logic, it also supports $\Sigma$-, $\Pi$-, and $\mathrm{Id}$-types. A detailed account of this model will be made available soon.

Besides providing a first non-trivial, semantically motivated model of such a type system that goes properly beyond DILL and DTT, this work served as a stepping stone for a model in a category of games, developed together with Radha Jagadeesan and Samson Abramsky. In particular, this provides a game semantics for dependent type theory. 

An indexed category of spectra over topological spaces has been studied as a setting for stable homotopy theory \cite{may2006parametrized,ponto2012duality}. It has been shown to admit $I$-, $\otimes$-, $\multimap$-, and $\Sigma$-types. The natural candidate for a comprehension adjunction here is that between the infinite suspension spectrum and the infinite loop space: $L\dashv M\;\;=\;\;\Sigma^\infty\dashv \Omega^\infty$. A detailed examination of the situation and an explanation of the relation with the Goodwillie calculus is desirable. This might fit in with our ultimate objective of a linear analysis of homotopy type theory.

Another fascinating possibility is that of models related to quantum mechanics. Non-dependent linear type theory has found interesting interpretations in quantum computation \cite{AbrDun:CQLv2:2004}. The question arises whether the extension to dependent linear types has a natural counterpart in physics and could, for example, provide stronger type systems for quantum computing. Also suggestive is Schreiber's work \cite{schreiber2014quantization}, which sketches how linear dependent types can serve as a language to talk about quantum field theory and quantisation in particular.

Finally, there are still many theoretical questions about the type theory. Can we find interesting models with type dependency on the co-Kleisli category of $!$, and can we make sense of additive $\Sigma$- and $\mathrm{Id}$-types, for example from the point of view of syntax? Or should we perhaps doubt the canonicity of the Girard translation and accept that dependent types are more naturally modelled in co-Eilenberg-Moore categories? Is there an analogue of strong/dependent E-rules for ILDTT and how do we model interesting intensional $\mathrm{Id}$-types? Does the Curry-Howard correspondence extend in its full glory: do we have a propositions-as-types interpretation of linear predicate logic in ILDTT? These questions call for a combination of research into the formal system and the study of specific models. We hope that the general framework sketched here will play its part in connecting all the different sides of the story: from syntax to semantics; from computer science and logic to geometry and physics.
\subsection*{Acknowledgements}
\vspace{-5pt}
My thanks go to Samson Abramsky and Radha Jagadeesan for the stimulating discussions and to Urs Schreiber for sparking my curiosity about this topic. I am indebted to the anonymous reviewers, whose comments have been very helpful. This research was supported by the EPSRC and the Clarendon Fund.
\vspace{-12pt}

\scriptsize
\bibliographystyle{splncs}
\bibliography{tau}

\begin{thebibliography}{4}
\bibitem{AbrDun:CQLv2:2004} S. Abramsky and R. Duncan: \textit{A categorical quantum logic.} Mathematical Structures in Computer Science / Volume 16 / Issue 03 / June 2006.
\bibitem{barber1996dual} A. Barber: \textit{Dual Intuitionistic Linear Logic.} 1996.
\bibitem{bierman1994intuitionistic} G.M. Bierman: \textit{On Intuitionistic Linear Logic.} 1994.
\bibitem{blute1994fock} R. Blute and P. Panangaden: \textit{Proof Nets as Formal Feynman Diagrams.} In: New Structures for Physics, 2010.
\bibitem{cervesato1996linear} I. Cervesato and F. Pfenning: \textit{A Linear Logical Framework.} 1997.
\bibitem{church1940formulation} A. Church: \textit{A Formulation of the Simple Theory of Types.} JSL 5, 1940.
\bibitem{dal2011linear} U. Dal Lago and M. Gaboardi: \textit{Linear Dependent Types and Relative Completeness.} In: Logical Methods in Computer Science Vol. 8(4:11), 2012.
\bibitem{petit2012linear} U. Dal Lago and B. Petit: \textit{Linear dependent types in a call-by-value scenario.} In: Science of Computer Programming 84, 2014.
\bibitem{gaboardi2013linear}  M. Gaboardi et al.: \textit{Linear Dependent Types for Differential Privacy.} In: POPL '13, 2013.
\bibitem{girard1987linear} J.-Y. Girard: \textit{Linear Logic.} In: Theoretical Computer Science 50, 1987.
\bibitem{hofmann1997syntax} M. Hofmann: \textit{Syntax and Semantics of Dependent Type Theory.} In: Semantics and Logics of Computation 1997.
\bibitem{howard1995formulae} W. Howard: \textit{The formulae-as-types notion of construction.} In: J. Roger Seldin, Jonathan P.; Hindley, (ed.s), To H.B. Curry: Essays on Combinatory Logic, Lambda Calculus and Formalism. 1980. original paper manuscript from 1969.
\bibitem{jacobs1993comprehension} B. Jacobs: \textit{Comprehension categories and the semantics of type dependency.} 1993.
\bibitem{lambek1972deductive} J. Lambek: \textit{Deductive systems and categories, III.} In: Toposes, Algebraic Geometry, and Logic
(F. W. Lawvere, ed.), Springer-Verlag, 1972.
\bibitem{lawvere1970equality} F.W. Lawvere: \textit{Equality in hyperdoctrines and comprehension schema as an adjoint functor.} 1970.
\bibitem{martin1998intuitionistic} P. Martin-L\"of: \textit{An intuitionistic theory of types.} In: Twenty-five years of constructive type theory. Venice, 1995. 
\bibitem{may2006parametrized} J.P. May and J. Sigurdsson: \textit{Parametrized Homotopy Theory.} 2006.
\bibitem{mellies2009categorical} P.A. Mellies: \textit{Categorical Semantics of Linear Logic.} In: Interactive models of computation and program behaviour. Pierre-Louis Curien, Hugo Herbelin, Jean-Louis Krivine, Paul-André Melliès. Panoramas et Synthèses 27, Société Mathématique de France, 2009. 
\bibitem{o1999logic} P.W. O'Hearn and D.J. Pym: \textit{The Logic of Bunched Implications.} The Bulletin of Symbolic Logic, Vol. 5, No. 2. 1999.
\bibitem{palmgren1990domain} E. Palmgren and V. Stoltenberg-Hansen: \textit{Domain Interpretations of Martin-L\"of's Partial Type Theory.} 1990.
\bibitem{pitts2001categorical} A. Pitts: \textit{Categorical Logic.} In: Handbook for Logic in Computer Science Vol. VI. 1995.
\bibitem{ponto2012duality} K. Ponto and M. Shulman: \textit{Duality and Traces in Indexed Monoidal Categories.} 2012.
\bibitem{power1989general} A.J. Power: \textit{A general coherence result.} JPAA Vol. 57 Iss. 2, 1989.
\bibitem{schreiber2014quantization} U. Schreiber: \textit{Quantization via Linear Homotopy Types.} 2014.
\bibitem{shulman2013enriched} M. Shulman: \textit{Enriched Indexed Categories.} 2008.
\end{thebibliography}

\end{document}